\theoremstyle{plain}
\newtheorem{theorem}{Theorem}
\numberwithin{theorem}{section}
\newtheorem{lemma}[theorem]{Lemma}          
\newtheorem{proposition}[theorem]{Proposition}
\theoremstyle{definition}
\newtheorem{definition}[theorem]{Definition}
\newtheorem{remark}[theorem]{Remark}
\newtheorem{assumption}[theorem]{Assumption}
\renewcommand{\>}{\rangle}
\renewcommand{\(}{\left(}
\renewcommand{\)}{\right)}
\renewcommand{\[}{\left[}
\newcommand\Eb{\mathds{E}}
\newcommand\Pb{\mathds{P}}
\newcommand\Rb{\mathds{R}}
\newcommand\Ac{\mathscr{A}}
\newcommand\Bc{\mathscr{B}}
\newcommand\Cc{\mathscr{C}}
\newcommand\Gc{\mathscr{G}}
\newcommand\Hc{\mathscr{H}}
\newcommand\Mc{\mathscr{M}}
\newcommand\Lc{\mathscr{L}}
\newcommand\Nc{\mathscr{N}}
\newcommand\Oc{\mathcal{O}}
\newcommand\Qc{\mathscr{Q}}
\newcommand\Yc{\mathscr{Y}}
\newcommand\eps{a}
\newcommand\sig{\sigma}
\newcommand\Lam{\Lambda}
\newcommand\gam{\gamma}
\newcommand\lam{\lambda}
\newcommand\del{\delta}
\newcommand\xb{\bar{x}}
\newcommand\yb{\bar{y}}
\renewcommand\d{\partial}
\newcommand\dd{\mathrm{d}}
\newcommand\ee{\mathrm{e}}
\newcommand{\pa}{\partial}
\newcommand{\un}[1]{u^{(#1)}}
\newcommand{\qn}[1]{q^{(#1)}}
\newcommand{\Vn}[1]{V^{(#1)}}
\newcommand{\Vz}{V^{(0)}}
\newcommand{\Vo}{V^{(1)}}
\newcommand{\Vnb}[1]{\bar V^{(#1)}}
\newcommand{\Rz}{R^{(0)}}
\newcommand{\D}{\mathcal{D}}
\newcommand{\half}{\tfrac{1}{2}}
\def \eps {{a}}
\newcommand{\Acb}{\widehat{\Ac}}
\newcommand\Lch{\widehat{\Lc}}
\newcommand\Gch{\widehat{\Gc}}
\newcommand\Ych{\widehat{\Yc}}
\newcommand{\psib}{\bar{\psi}}
\begin{document}

\title{Portfolio Optimization under Local-Stochastic Volatility: Coefficient Taylor Series Approximations \& Implied Sharpe Ratio}

\author{
Matthew Lorig
\thanks{Department of Applied Mathematics, University of Washington, Seattle, WA, USA.
Work partially supported by NSF grant DMS-0739195.}
\and
Ronnie Sircar
\thanks{ORFE Department, Princeton University, Princeton, NJ, USA.
Work partially supported by NSF grant DMS-1211906.}
}

\date{\today}

\maketitle

\begin{abstract}
We study the finite horizon Merton portfolio optimization problem in a general local-stochastic volatility setting.  Using model coefficient expansion techniques, we derive approximations for the both the value function and the optimal investment strategy.  We also analyze the `implied Sharpe ratio' and derive a series  approximation for this quantity.
The zeroth-order approximation of the value function and optimal investment strategy correspond to those obtained by \cite{merton1969lifetime} when the risky asset follows a geometric Brownian motion.  
The first-order correction of the value function can, for general utility functions, be expressed as a differential operator acting on the zeroth-order term.
For power utility functions, higher order terms can also be computed as a differential operator acting on the zeroth-order term. We give a rigorous accuracy bound for the higher order approximations in this case in pure stochastic volatility models.   A number of examples are provided in order to demonstrate numerically the accuracy of our approximations.  
\end{abstract}


\section{Introduction}
The continuous time portfolio optimization problem was first studied by \cite{merton1969lifetime}, where 
he considers a market that contains a riskless bond, which grows at a fixed deterministic rate, and multiple risky assets, each of which is modeled as a geometric Brownian motion with constant drift and constant volatility.
In this setting, Merton obtains an explicit expression for the value function and optimal investment strategy of an investor who wishes to maximize expected utility when the utility function has certain specific forms.
However, much empirical evidence suggests that volatility is stochastic and is driven by both local and auxiliary factors, and so 
it is natural to ask how an investor would change his investment strategy in the presence of stochastic volatility.

There have been a number of studies in this direction, a few of which, we now mention.
\cite{dries2005} studies the finite horizon optimal investment problem in a CEV local volatility model.
\cite{chacko-viceira-2005} examine the infinite-horizon optimal investment problem in a Heston-like stochastic volatility model.
While both studies provide an explicit expression for an investor's value function and optimal investment strategy, the results are specific to the models studied in these two papers and for power utility functions. 

Approximation methods, which have been extensively used for option pricing and related problems, have been adapted for the portfolio selection problem, allowing for a wider class of volatility models and utility functions. The Merton problem for power utilities under {\em fast mean-reverting} stochastic volatility
was analyzed by asymptotic methods in \cite[Chapter 10]{fps-book}, and the related 
 partial hedging stochastic control problem in \citet{mattias1,montreal} using asymptotic analysis for the dual problem. More recently, \cite{fouque-sircar-zariphopoulou-2012} consider a general class of multiscale stochastic volatility models and general utility functions.  Here, volatility is driven by one fast-varying and one slow-varying factor.  The separation of time-scales allows the authors to obtain explicit approximations for the investor's value function and optimal control, by combining singular and regular perturbation methods on the primal problem.  These methods were previously developed to obtain explicit price approximations for various financial derivatives, as described in the book \cite{fpss}.

Here we study the Merton problem in a general local-stochastic volatility (LSV) setting. The LSV setting encompasses both local volatility models (e.g., CEV and quadratic) and stochastic volatility models (e.g., Heston and Hull-White) as well as models that combine local and auxiliary factors of volatility (e.g., SABR and $\lambda$-SABR).  As explicit expressions for the value function and optimal investment strategy are not available in this very general setting, we focus on obtaining approximations for these quantities.  Specifically, we will obtain an approximation for the solution of a nonlinear Hamilton-Jacobi-Bellman partial differential equation (HJB PDE) by expanding the PDE coefficients in a Taylor series.  The Taylor series expansion method was initially developed in \cite{pagliarani2011analytical} to solve linear pricing PDEs under local volatility models, and is closely related to the classical parametrix method (see, for instance, \cite{corielli2010parametrix} for applications in finance).  The method was later extended in \cite{lorig-pagliarani-pascucci-2} to include more general polynomial expansions and to handle multidimensional diffusions.  Additionally, the technique has been applied to models with jumps; see \cite{pascucci}, \cite{lorig-pagliarani-pascucci-1} and \cite{lorig-pagliarani-pascucci-5}.  We remark that the PDEs that arise in no-arbitrage pricing theory are linear, whereas the HJB PDE we consider here is fully nonlinear.

The rest of the paper proceeds as follows. In Section \ref{sec:merton}, we introduce a general class of local-stochastic volatility models, define a representative investor's value function and write the associated HJB PDE. Section \ref{sec:asymptotics} presents the first order approximation and formulas for the principal LSV correction to the value function and the optimal investment strategy. Motivated by the notion of Black-Scholes implied volatility,  we also develop the notion of implied Sharpe ratio, that provides a greater intuition about the resulting formulas, which are summarized in Section \ref{summ}. We discuss higher order terms in Section \ref{higher}, and show that power utilities are particularly amenable to obtaining explicit formulas for further terms in the approximation. In Section \ref{sec:examples}, we provide explicit results for power utility.  In particular, we derive rigorous error bounds for the value function in a stochastic volatility setting. In Section \ref{eq:examples2}, we provide two numerical examples, illustrating the accuracy and versatility of our approximation method.
Section \ref{sec:conclusion} concludes.

\section{Merton Problem under Local-Stochastic Volatility\label{sec:merton}}
We consider a local-stochastic volatility model for a risky asset $S$:
\begin{align}
\frac{\dd S_t}{S_t} & =  \tilde{\mu}(S_t,Y_t)\, \dd t + \tilde{\sig}(S_t,Y_t) \, \dd B_t^{(1)}\label{dSeqn}\\
\dd Y_t	&=	\tilde{c}(S_t,Y_t) \,\dd t + \tilde{\beta}(S_t,Y_t) \,\dd B_t^{(2)}, \nonumber
\end{align}
where $B^{(1)}$ and $B^{(2)}$ are standard Brownian motions under a probability measure $\Pb$ with correlation coefficient $\rho \in (-1,1)$: $\Eb\{\dd B_t^{(1)}\,\dd B_t^{(2)}\}=\rho\,\dd t$. The 
log price process $X=\log S$ is, by It\^o's formula, described by the following :
\begin{align}
\dd X_t	&=	b(X_t,Y_t)\,  \dd t + \sig(X_t,Y_t) \, \dd B_t^{(1)} \label{dXdY}\\
\dd Y_t	&=	c(X_t,Y_t) \,\dd t + \beta(X_t,Y_t) \,\dd B_t^{(2)} ,
\end{align}
where $\sig(X_t,Y_t)= \tilde{\sig}(e^{X_t},Y_t)$, and similarly $(\mu,c,\beta)$
from $(\tilde{\mu},\tilde{c},\tilde{\beta})$, and we have defined 
$$b(X_t,Y_t)= \mu(X_t,Y_t) - \frac{1}{2}\sig^2(X_t,Y_t).$$
The model coefficient functions $(\mu,\sigma,c,\beta)$ are smooth functions of $(x,y)$ and are such that the Markovian system \eqref{dXdY} admits a unique strong solution.

\subsection{Utility Maximization and HJB Equation}
We denote by $W$ the wealth process of an investor who invests $\pi_t$ units of currency in $S$ at time $t$ and invests $(W_t-\pi_t)$ units of currency in a riskless money market account.  For simplicity, we assume that the risk-free rate of interest is zero, and so the wealth process $W$ satisfies
\begin{align}
\dd W_t
	&=	\frac{\pi_t}{S_t} \,\dd S_t
	=		\pi_t \mu(X_t,Y_t)\, \dd t + \pi_t \sig(X_t,Y_t)\, \dd B_t^{(1)} .
\end{align}
The investor acts to maximize the expected utility of portfolio value, or wealth, at a fixed finite time horizon $T$: $\Eb \, \{U(W_T)\}$, where $U:\Rb_+\to\Rb$ is a smooth, increasing and strictly concave utility function satisfying the ``usual conditions'' $U'(0^+)=\infty$ and $U'(\infty)=0$. 

We define the investor's \emph{value function} $V$ by
\begin{equation}
V(t,x,y,w):=	\sup_{\pi \in \Pi} \Eb\{U(W_T)\mid X_t=x,Y_t=y,W_t=w\} , 
\label{eq:v.def}
\end{equation}
where $\Pi$ is the set of \emph{admissible strategies} $\pi$, which are non-anticipating and satisfy 
\begin{equation}
\Eb\left\{\int_0^T \pi_t^2 \sig^2 (X_t,Y_t)\,\dd t\right\} < \infty .
\end{equation}
and $W_t\geq0$ a.s.

We assume that $V \in C^{1,2,2,2}([0,T]\times\Rb\times\Rb\times\Rb_+)$.  The Hamilton-Jacobi-Bellman partial differential equation (HJB-PDE) associated with the stochastic control problem \eqref{eq:v.def} is
\begin{align}
\(\frac{\pa}{\pa t} + \Ac\) V + \max_{\pi\in\Rb} \Ac^\pi V
	&=	0 ,  &
V(T,x,y,w)
	&=	U(w) , \label{eq:hjb.pde}
\end{align}
where the operators $\Ac$ and $\Ac^\pi$ are given by
\begin{align}
\Ac
	&=	 \tfrac{1}{2} \sig^2(x,y) \frac{\pa^2}{\pa x^2}+ \rho \sig(x,y) \beta(x,y) \frac{\pa^2}{\pa x\pa y}
			 + \tfrac{1}{2} \beta^2(x,y) \frac{\pa^2}{\pa y^2} +  b(x,y) \frac{\pa}{\pa x}  + c(x,y) \frac{\pa}{\pa y}, \label{Adef}\\
\Ac^\pi
	&=\frac{1}{2} \pi^2\sig^2(x,y) \frac{\pa^2}{\pa w^2}	+ \pi\(\sig^2(x,y) \frac{\pa^2}{\pa x\pa w}
	+\rho \sig(x,y) \beta(x,y)\frac{\pa^2}{\pa y\pa w} + \mu(x,y) \frac{\pa}{\pa w}\).
\end{align}
We refer to the books \cite{flemingsoner} and \cite{pham} for technical details. 

The optimal strategy $\pi^*=\mbox{argmax}_\pi \,\Ac^\pi V$ is given (in feedback form) by
\begin{equation}
\pi^* = -\frac{\left(\sig^2(x,y)V_{xw}
	+\rho \sig(x,y) \beta(x,y)V_{yw} + \mu(x,y)V_w\right)}{\sig^2(x,y)V_{ww}},\label{eq:pi}
\end{equation}
where subscripts indicate partial derivatives. 

Inserting the optimal strategy $\pi^*$ into the HJB-PDE \eqref{eq:hjb.pde} yields
\begin{equation}
\( \frac{\pa}{\pa t}+ \Ac \) V + \Nc (V) , \qquad
V(T,x,y,w)
	=	U(w) , \label{eq:hjb.3} 
\end{equation}
where $\Nc (V)$ is a nonlinear term, which is given by
\begin{equation}
\Nc (V) 	= -\frac{\left(\sig(x,y)V_{xw}
	+\rho  \beta(x,y)V_{yw} + \lambda(x,y)V_w\right)^2}{2V_{ww}}, \label{Ndef}
\end{equation} 
and we have introduced the \emph{Sharpe ratio}
\begin{equation}
\lam(x,y)=	\frac{\mu(x,y)}{\sig(x,y)} . 
\end{equation}

\subsection{Constant Parameter Merton Problem\label{Mertonsec}}
We review and introduce notation that will be used later for the {\em constant parameter} Merton problem, that is, when $\tilde\mu$ and $\tilde\sigma$ in \eqref{dSeqn} are constant, and so therefore $\mu$ and $\sigma$ are constant and the stock price $S$ follows the geometric Brownian motion 
$$ \frac{dS_t}{S_t} = \mu\,dt + \sigma\, dB^{(1)}_t. $$
Then the Merton value function $M(t,w;\lambda)$ for the investment problem for this stock, whose constant Sharpe ratio is $\lambda=\mu/\sigma$, is the unique smooth solution of the HJB PDE problem 
\begin{equation}
M_t -  \tfrac{1}{2} \lam^2  \frac{M_w^2}{M_{ww}} =0, \qquad M(T,w)=U(w),  \label{M.pde}
\end{equation}
on $t<T$ and $w>0$. Smoothness of $M$ given a smooth utility function $U$ (as assumed above), as well as differentiability of $M$ in $\lambda$ is easily established by the Legendre transform, which converts \eqref{M.pde} into a linear constant coefficient parabolic PDE problem for the dual. Regularity results for the latter problem are standard. 

It is convenient to introduce the Merton risk tolerance function
\begin{equation}
R(t,w;\lam) := -\frac{M_w}{M_{ww}}(t,w;\lam), \label{Rzdef}
\end{equation}
and the operator notation
\begin{equation}
\D_k := \(R(t,w;\lam)\)^k\frac{\pa^k}{\pa w^k}, \quad k=1,2,\cdots. \label{Dkdef}
\end{equation}

We recall also the Vega-Gamma relationship taken from \cite[Lemma 3.2]{fouque-sircar-zariphopoulou-2012}:
\begin{lemma}\label{vegagammalemma}
The Merton value function $M(t,w;\lambda)$ satisfies the ``Vega-Gamma'' relation
\begin{equation}
 \frac{\pa M}{\pa\lambda} = -(T-t)\lambda\D_2M, \label{vegagamma}
 \end{equation}
where $\D_2$ is defined in \eqref{Dkdef}.
\end{lemma}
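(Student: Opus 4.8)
The plan is to pass to the dual problem via the Legendre transform, as suggested in the remark preceding the lemma: this linearizes \eqref{M.pde} and renders the $\lambda$-dependence transparent. Since $M(t,\cdot;\lambda)$ is smooth, increasing and strictly concave in wealth, I would introduce the dual value function
\begin{equation*}
\widetilde{M}(t,z;\lambda) := \sup_{w>0}\bigl(M(t,w;\lambda)-zw\bigr),
\end{equation*}
whose supremum is attained at the unique $w^{*}=w^{*}(t,z;\lambda)$ solving $M_w(t,w^{*};\lambda)=z$. The envelope theorem then yields the usual duality relations $\widetilde{M}_z=-w^{*}$, $\widetilde{M}_{zz}=-1/M_{ww}$, $\widetilde{M}_t=M_t$, and, crucially, $\widetilde{M}_\lambda=M_\lambda$, all evaluated at the conjugate pair with $z=M_w$.

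Next I would substitute these relations into \eqref{M.pde}. Rewriting it as $M_t=\half\lambda^2 M_w^2/M_{ww}$ and using $M_w=z$, $M_{ww}=-1/\widetilde{M}_{zz}$ turns it into the linear equation
\begin{equation*}
\widetilde{M}_t + \half\lambda^2 z^2\,\widetilde{M}_{zz}=0, \qquad \widetilde{M}(T,z;\lambda)=\widetilde{U}(z),
\end{equation*}
where $\widetilde{U}$ is the convex conjugate of $U$; this is exactly the driftless Black--Scholes equation with volatility $\lambda$ and terminal data independent of $\lambda$. The heart of the argument is then a Vega--Gamma identity for this linear equation. Setting $\xi=\log z$ sends $z^2\partial_z^2$ to $\partial_\xi^2-\partial_\xi$, giving the constant-coefficient equation $\widetilde{M}_t+\half\lambda^2(\widetilde{M}_{\xi\xi}-\widetilde{M}_\xi)=0$, whose solution depends on $(t,\lambda)$ only through the combination $\tau:=\lambda^2(T-t)$, say $\widetilde{M}=F(\tau,\xi)$ with $F_\tau=\half(F_{\xi\xi}-F_\xi)$. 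Differentiating in $\lambda$ then gives $\widetilde{M}_\lambda=2\lambda(T-t)F_\tau=\lambda(T-t)(\widetilde{M}_{\xi\xi}-\widetilde{M}_\xi)=\lambda(T-t)\,z^2\widetilde{M}_{zz}$.

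Finally I would translate back to the primal variables. Using $\widetilde{M}_\lambda=M_\lambda$, $z=M_w$ and $\widetilde{M}_{zz}=-1/M_{ww}$,
\begin{equation*}
M_\lambda=\lambda(T-t)\,z^2\widetilde{M}_{zz}=-\lambda(T-t)\frac{M_w^2}{M_{ww}}=-(T-t)\lambda\,\D_2 M,
\end{equation*}
since $\D_2 M=R^2M_{ww}=M_w^2/M_{ww}$ by \eqref{Rzdef}--\eqref{Dkdef}, which is precisely \eqref{vegagamma}. The step I expect to demand the most care is justifying that the dual solution depends on $(t,\lambda)$ solely through $\tau=\lambda^2(T-t)$, together with the interchange of $\partial_\lambda$ and the solution operator: this rests on the well-posedness and Gaussian heat-kernel representation of $F$, and hence on sufficient growth control of $\widetilde{U}$ (which inherits the behaviour of $U$ through $U'(0^{+})=\infty$ and $U'(\infty)=0$). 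An equivalent route that sidesteps the representation is to verify directly that $\lambda(T-t)z^2\widetilde{M}_{zz}$ solves the linear PDE obtained by differentiating the dual equation in $\lambda$ (with zero terminal data) and to invoke uniqueness; either way, the regularity and $\lambda$-differentiability needed are exactly what the Legendre transform supplies, as noted before the statement.
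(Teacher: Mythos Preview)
Your argument is correct. The Legendre transform linearizes the Merton HJB into the driftless Black--Scholes equation $\widetilde{M}_t+\tfrac{1}{2}\lambda^2 z^2\widetilde{M}_{zz}=0$ with $\lambda$-independent terminal data, and the time-rescaling $\tau=\lambda^2(T-t)$ then delivers $\widetilde{M}_\lambda=\lambda(T-t)z^2\widetilde{M}_{zz}$, which pulls back to the claimed identity via the duality relations. Your final verification that $\D_2 M=R^2 M_{ww}=M_w^2/M_{ww}$ is exactly right.

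As for comparison: the present paper does not actually prove the lemma; it merely recalls it from \cite[Lemma~3.2]{fouque-sircar-zariphopoulou-2012}. The paragraph preceding the statement, however, explicitly points to the Legendre transform as the route to smoothness and $\lambda$-differentiability of $M$, so your approach is precisely the one the authors have in mind (and is the one used in the cited reference). The alternative you mention at the end---differentiating the dual PDE in $\lambda$ and checking that $\lambda(T-t)z^2\widetilde{M}_{zz}$ satisfies the resulting inhomogeneous linear problem with zero terminal data, then invoking uniqueness---is a clean way to avoid the heat-kernel representation altogether, and would be the more robust choice if one were worried about growth of $\widetilde{U}$ at $z\to 0^+$ and $z\to\infty$.
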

Thus the derivative of the value function with respect to the Sharpe ratio (analogous to an option price's derivative with respect to volatility, its Vega) is proportional to its negative ``second derivative'' $\D_2M$ (which is analogous to the option price's second derivative with respect to the stock price, its Gamma).
This result will be used repeatedly in deriving the implied Sharpe ratio in Section \ref{ISR} and the approximation to the optimal portfolio in Section \ref{optapprox}.

\section{Coefficient Expansion \& First Order Approximation}
\label{sec:asymptotics}
For general $\{c,\beta,\mu,\sig,\lam\}$  and $U$, there is no closed form solution of \eqref{eq:hjb.3}.  The goal of this section is to find series approximations for the value function 
$$V= \Vn{0} + \Vn{1} + \Vn{2}+\cdots,$$ 
and the optimal investment strategy 
$$\pi^* = \pi_0^* + \pi_1^* + \pi_2^*+\cdots,$$ 
using model coefficient (Taylor series) expansions.  This approach 
is developed for the linear European option pricing problem in a general LSV setting in \cite{lorig-pagliarani-pascucci-2}, where explicit approximations for option prices and implied volatilities are obtained by expanding the coefficients of the underlying diffusion as a Taylor series.  Note that, here the HJB-PDE \eqref{eq:hjb.3} is fully nonlinear. Our first order approximation formulas are summarized in Section \ref{summ}.

\subsection{Coefficient Polynomial Expansions}
We begin by fixing an point $(\xb,\yb) \in \Rb^2$.  For any function $\chi$ that is analytic in a neighborhood of $(\xb,\yb)$, we define the following family of functions indexed by $\eps\in[0,1]$:
\begin{equation}
\chi^\eps(x,y):=	\sum_{n=0}^\infty \eps^n \chi_n(x,y) , \label{eq:chi.expand} 
\end{equation}
where
\begin{equation}
\chi_n(x,y)
	:=	\sum_{k=0}^n \chi_{n-k,k} \cdot (x-\xb)^{n-k}(y-\yb)^k , \qquad
\chi_{n-k,k}
	:=	\frac{1}{(n-k)!k!} \d_x^{n-k}\d_y^k \chi(\xb,\yb), \label{eq:chi.n}
\end{equation}
and we note that $\chi_0=\chi_{0,0}=\chi(\xb,\yb)$ is a constant.
Observe that $\chi^\eps |_{\eps=1}$ is the Taylor series of $\chi$ about the point $(\xb,\yb)$. Here, $\eps$ is an {\em accounting parameter} that will be used to identify successive terms of our approximation.

In the PDE \eqref{eq:hjb.3}, we will replace each of the coefficient functions
$$ \chi\in \{ \mu, c, \sig^2, \beta^2, \lam^2, \sig \beta, \beta \lam \}$$
by $\chi^\eps$, for some $\eps,(\xb,\yb)$, and then use the series expansion \eqref{eq:chi.expand} for $\chi^\eps$. 
Another way of saying this is that we assume the coefficients are of the form
$$ \chi\Big(\xb+\eps(x-\xb), \yb+\eps(y-\yb)\Big),$$
whose exact Taylor series is given by \eqref{eq:chi.expand}, and we are interested in the case when $\eps=1$. 

Consider now the following family of HJB-PDE problems
\begin{equation}
	\(\frac{\pa}{\pa t} + \Ac^\eps\)V^\eps + \Nc^\eps(V^\eps)=0,\qquad V^\eps(T,x,y,w)
	=	U(w), \label{eq:u.eps.pde}
\end{equation}
where, for $\eps\in [0,1]$,  $\Ac^\eps$ and $\Nc^\eps(\cdot)$ are obtained from $\Ac$ and $\Nc(\cdot)$ in \eqref{Adef} and \eqref{Ndef} by making the change
\begin{align}
\{ \mu, c, \sig^2, \beta^2, \lam^2, \sig \beta, \beta \lam \}
	&\xmapsto{    }\{ \mu^\eps, c^\eps, (\sig^2)^\eps, (\beta^2)^\eps, (\lam^2)^\eps, (\sig \beta)^\eps, (\beta \lam)^\eps \} .
\end{align}
The linear operator in the PDE \eqref{eq:u.eps.pde} can therefore be written as
$$ \Ac^\eps=\sum_{n=0}^\infty\eps^n\Ac_n,$$ where we define
\begin{align}
\Ac_n
	&:=	( \tfrac{1}{2} \sig^2 )_n(x,y)\frac{\pa^2}{\pa x^2} 			+ (\rho \sig \beta )_n(x,y) \frac{\pa^2}{\pa x\pa y}
 + ( \tfrac{1}{2} \beta^2 )_n(x,y)\frac{\pa^2}{\pa y^2} 	
+b_n(x,y) \frac{\pa}{\pa x}+ c_n(x,y)\frac{\pa}{\pa y} ,
\label{eq:An}
\end{align}
and the expansion of the nonlinear term is a more involved computation.

We construct a series approximation for the function $V^\eps$ as a power series in $\eps$:
\begin{align}
V^\eps(t,x,y,w)
	&=	\sum_{n=0}^\infty \eps^n V^{(n)}(t,x,y,w) . \label{eq:u.expand} 
\end{align}
Note that the functions $V^{(n)}$ are not constrained to be polynomials in $(x,y,w)$, and in general they will not be. 
Our approximate solution to \eqref{eq:hjb.3}, which is the problem of interest, will then follow by setting $\eps=1$.

\subsection{Zeroth \& First Order Approximations}
We insert \eqref{eq:u.expand} into \eqref{eq:u.eps.pde} and collect terms of like powers of $\eps$.  At lowest order we obtain
\begin{equation}
\(\frac{\pa}{\pa t} + \Ac_0\) \Vn{0} -\frac{\left(\sig_0\Vz_{xw}
	+\rho  \beta_0\Vz_{yw} + \lambda_0\Vz_w\right)^2}{2\Vz_{ww}}=0, \qquad \Vz(T,x,y,w)=U(w),  
\label{eq:u0}
\end{equation}
where the linear operator $\Ac_0$, found from \eqref{eq:An}, has constant coefficients:
\begin{equation}
\Ac_0
	=	\tfrac{1}{2} \sig^2_0\frac{\pa^2}{\pa x^2}+ \rho \sig_0 \beta_0\frac{\pa^2}{\pa x\pa y}
 + \tfrac{1}{2} \beta^2_0\frac{\pa^2}{\pa y^2} +b_0\frac{\pa}{\pa x}
 + c_0\frac{\pa}{\pa y}. \label{A0def}
 \end{equation}
  
As a consequence, the solution of \eqref{eq:u0} is independent of $x$ and $y$: $\Vz=\Vz(t,w)$, and we have
\begin{equation}
\Vz_t -  \tfrac{1}{2} \lam^2 _0  \frac{\(\Vz_w\)^2}{\Vz_{ww}} =0, \qquad \Vz(T,w)=U(w).  \label{eq:u0.pde}
\end{equation}
We observe that \eqref{eq:u0.pde} is the same as the PDE problem \eqref{M.pde} that arises when solving the Merton problem assuming the underlying stock has a constant drift $\mu_0=\mu(\xb,\yb)$, diffusion coefficient $\sig_0=\sig(\xb,\yb)$ and so constant Sharpe ratio $\lam_0 = \lam(\xb,\yb) = \mu(\xb,\yb)/\sig(\xb,\yb)$. Therefore, we have
$$ \Vz(t,w) = M(t,w;\lam_0). $$
The PDE \eqref{eq:u0.pde} can be solved either analytically (for certain utility functions $U$), or numerically.  

Recall the definition of the risk tolerance function in \eqref{Rzdef} and the operators $\D_k$ in \eqref{Dkdef}, where now we take in those formulas the Sharpe ratio $\lam_0$:
\begin{equation}
R(t,w;\lam_0) = -\frac{\Vn{0}_w}{\Vn{0}_{ww}}(t,w;\lam),\qquad \D_k = \(R(t,w;\lam_0)\)^k\frac{\pa^k}{\pa w^k}, \quad k=1,2,\cdots. \label{Dkdef2}
\end{equation}
Proceeding to the order $\eps$ terms in \eqref{eq:u.eps.pde}, we obtain
\begin{equation}
\(\frac{\pa}{\pa t} + \Ac_0\) \Vn{1} + \half\lambda^2_0\D_2\Vo + \lambda_0^2\D_1\Vo + \rho\beta_0\lambda_0\D_1\frac{\pa}{\pa y}\Vo + \mu_0\D_1\frac{\pa}{\pa x}\Vo =-(\tfrac{1}{2}\lam^2)_1\D_1\Vz.
\end{equation}
We can re-write this more compactly as
\begin{equation}
\(\frac{\pa}{\pa t} + \Ac_0+ \Bc_0\) \Vn{1} + H_1 =0, \qquad \Vo(T,x,y,w)=0,\label{Voeqn}
\end{equation}
where the linear operator $\Bc_0$ and the source term $H_1$ are given by
\begin{align}
\Bc_0
	&=	\tfrac{1}{2}\lam^2_0 \D_2+ \lam^2_0\D_1
			+ \rho \beta_0 \lam_0  \D_1\frac{\pa}{\pa y}
			+ \mu_0 \D_1\frac{\pa}{\pa x} , \label{eq:B0}\\
H_1(t,x,y,w) & = 	(\tfrac{1}{2}\lam^2)_1(x,y)\D_1\Vz(t,w).	\label{eq:H1}	
\end{align}

\subsection{Transformation to Constant Coefficient PDEs}
Next, we apply a change of variable such that $\Vo$ can be found by solving a linear PDE with constant coefficients. 
We begin with the following lemma.
\begin{lemma}
Let $\Vz$ be the solution of \eqref{eq:u0.pde} and let $\Ac_0$ and $\Bc_0$ be as given in \eqref{eq:An} and \eqref{eq:H1}, respectively.  Then $\Vz$ satisfies the following PDE problem
\begin{align}
\(\frac{\pa}{\pa t} + \Ac_0+ \Bc_0\)\Vz=0, \qquad \Vz(T,w)=	U(w) . \label{eq:u0.pde.3}
\end{align}
\end{lemma}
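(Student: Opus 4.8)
The plan is to exploit the property, already established in the derivation of \eqref{eq:u0.pde}, that the zeroth-order term depends on $t$ and $w$ alone, $\Vz = \Vz(t,w)$. This single fact annihilates every term in $\Ac_0$ and $\Bc_0$ that carries an $x$- or $y$-derivative, and reduces the claimed identity to the Merton equation \eqref{eq:u0.pde} that $\Vz$ solves by construction. So the proof is a direct verification organized in two computations followed by a comparison.

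First I would dispose of $\Ac_0$. Inspecting \eqref{A0def}, each of its five terms contains at least one derivative in $x$ or $y$, namely one of $\pa^2/\pa x^2$, $\pa^2/\pa x\pa y$, $\pa^2/\pa y^2$, $\pa/\pa x$, $\pa/\pa y$. Since $\Vz$ is independent of $x$ and $y$, it follows at once that $\Ac_0\Vz = 0$.

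Next I would evaluate $\Bc_0\Vz$ from the form \eqref{eq:B0}. The last two terms, $\rho\beta_0\lam_0\,\D_1\,\pa/\pa y$ and $\mu_0\,\D_1\,\pa/\pa x$, differentiate $\Vz$ in $y$ and in $x$ respectively and hence kill it, so only $\half\lam^2_0\D_2$ and $\lam^2_0\D_1$ survive. Recalling \eqref{Dkdef2}, and in particular $R = -\Vz_w/\Vz_{ww}$ (well-defined because $U$, and therefore $M$, is strictly concave so $\Vz_{ww}<0$), one computes $\D_1\Vz = R\,\Vz_w = -(\Vz_w)^2/\Vz_{ww}$ and $\D_2\Vz = R^2\,\Vz_{ww} = (\Vz_w)^2/\Vz_{ww}$. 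Substituting yields
$$\Bc_0\Vz = \half\lam^2_0\frac{(\Vz_w)^2}{\Vz_{ww}} - \lam^2_0\frac{(\Vz_w)^2}{\Vz_{ww}} = -\half\lam^2_0\frac{(\Vz_w)^2}{\Vz_{ww}}.$$

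Finally, adding the two computations gives $\(\pa_t + \Ac_0 + \Bc_0\)\Vz = \Vz_t - \half\lam^2_0(\Vz_w)^2/\Vz_{ww}$, which is exactly the left-hand side of \eqref{eq:u0.pde} and therefore vanishes; the terminal condition $\Vz(T,w)=U(w)$ is inherited directly from \eqref{eq:u0.pde}. Since the argument is a verification rather than a construction, there is no serious obstacle. The only point demanding care is the algebraic reduction of the two surviving $\D_k$ terms, where the coefficients $\half\lam^2_0$ and $-\lam^2_0$ must combine with the correct signs so that the linear operator $\Bc_0$ reproduces precisely the nonlinear Merton term $-\half\lam^2_0(\Vz_w)^2/\Vz_{ww}$; this is exactly the content of the Vega--Gamma-type bookkeeping that motivated the definitions of $\D_1$ and $\D_2$.
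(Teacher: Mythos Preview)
Your proof is correct and follows essentially the same approach as the paper: both use that $\Vz$ depends only on $(t,w)$ to annihilate $\Ac_0$ and the $(x,y)$-derivative terms of $\Bc_0$, and then verify via the identities $\D_1\Vz=-(\Vz_w)^2/\Vz_{ww}$ and $\D_2\Vz=(\Vz_w)^2/\Vz_{ww}$ that the surviving part of $\Bc_0$ reproduces the Merton nonlinearity. The only difference is presentational order.
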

\begin{proof}
This follows directly from observing that the nonlinear term in \eqref{eq:u0.pde} can be written
\begin{equation}
 \frac{\(\Vz_w\)^2}{\Vz_{ww}}=\left(-\frac{\Vz_w}{\Vz_{ww}}\right)^2\Vz_{ww} = \D_2\Vz, \quad\mbox{ or }\quad
  \frac{\(\Vz_w\)^2}{\Vz_{ww}}=-\left(-\frac{\Vz_w}{\Vz_{ww}}\right)\Vz_{w} = -\D_1\Vz.
  \label{driftdiff}
 \end{equation}
Therefore, from \eqref{eq:u0.pde}, we have
\begin{equation} 
\(\frac{\pa}{\pa t}+\tfrac{1}{2}\lam^2_0 \D_2+ \lam^2_0\D_1\)\Vz=0, \label{u0main}
\end{equation}
and \eqref{eq:u0.pde.3} follows from the fact that $\Vz$ does not depend on $(x,y)$, while $\Ac_0$ and the last two terms in the expression \eqref{eq:B0} for $\Bc_0$ take derivatives in those variables.
\end{proof}

Next, it will be helpful to introduce the following change of variables.
\begin{definition}
\label{def:q}
We define the co-ordinate $z$ by the transformation 
\begin{align}
z(t,w)
	&=		- \log  \Vz_w(t,w) + \tfrac{1}{2}\lam^2_0 (T-t). \label{eq:z}
\end{align}
\end{definition}
We have the following change of variables formula, as used also in \cite[Section 2.3.2]{fouque-sircar-zariphopoulou-2012}.
\begin{lemma}\label{princtransf}
For a smooth function $\widehat{V}(t,x,y,w)$, define $q(t,x,y,z)$ by
$$ \widehat{V}(t,x,y,w)=q(t,x,y,z(t,w)). $$
Then we have
\begin{equation} 
\(\frac{\pa}{\pa t} + \Ac_0+ \Bc_0\)\widehat{V} = \(\frac{\pa}{\pa t} + \Ac_0+ \Cc_0\)q,
\label{transf}
\end{equation}
where the operator $\Cc_0$ is given by
\begin{align}
\Cc_0 	&=	\tfrac{1}{2}\lam^2_0\frac{\pa^2}{\pa z^2}
			+ \rho \beta_0 \lam_0 \frac{\pa^2}{\pa y\pa z}
			+ \mu_0 \frac{\pa^2}{\pa x\pa z}. \label{eq:C0}
\end{align}
\end{lemma}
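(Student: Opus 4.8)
The plan is to rewrite the entire left-hand side of \eqref{transf} in terms of partial derivatives of $q$ with respect to the independent variables $(t,x,y,z)$ and to verify that the coefficients reproduce $\tfrac{\pa}{\pa t}+\Ac_0+\Cc_0$ exactly. The only substantive content is a cancellation between the $z$-dependence of the time derivative and the first-order (in $z$) terms produced by $\Bc_0$; the rest is the chain rule. As a preliminary reduction, note that since $z=z(t,w)$ depends on neither $x$ nor $y$, each of $\pa_x,\pa_y,\pa_x^2,\pa_{xy}^2,\pa_y^2$ commutes with the substitution $w\mapsto z(t,w)$, so $\Ac_0\widehat V=\Ac_0 q$ verbatim and this operator needs no further attention. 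The work is concentrated in the Merton operators $\D_1,\D_2$ inside $\Bc_0$ and in the extra $\pa_t$ that acts through $z$.

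The key algebraic identity comes from differentiating the definition \eqref{eq:z}: one finds $z_w=-\Vz_{ww}/\Vz_w=1/R(t,w;\lam_0)$, i.e. $R\,z_w=1$. From this I read off the action of the Merton operators on $\widehat V$. Since $\widehat V_w=q_z z_w=q_z/R$, we get $\D_1\widehat V=R\,\widehat V_w=q_z$; and because $\pa_x,\pa_y$ commute with the change of variables, the mixed terms give $\D_1\pa_y\widehat V=q_{yz}$ and $\D_1\pa_x\widehat V=q_{xz}$. For the second-order term, differentiating $\widehat V_w=q_z z_w$ once more yields $\widehat V_{ww}=q_{zz}z_w^2+q_z z_{ww}$, hence $\D_2\widehat V=R^2 z_w^2 q_{zz}+R^2 z_{ww}q_z=q_{zz}-R_w q_z$, where I used $R^2 z_w^2=1$ together with its consequence $R^2 z_{ww}=R^2\pa_w(1/R)=-R_w$. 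Collecting the four pieces of $\Bc_0$ then gives $\Bc_0\widehat V=\Cc_0 q+(\lam_0^2-\tfrac12\lam_0^2 R_w)\,q_z$.

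It remains to show that this spurious first-order term is cancelled by the time derivative, and this is the main obstacle, since it is where the Merton PDE must be invoked. Writing $\widehat V_t=q_t+q_z z_t$, I must evaluate $z_t=-\Vz_{wt}/\Vz_w-\tfrac12\lam_0^2$. Differentiating the relation $\Vz_t=\tfrac12\lam_0^2(\Vz_w)^2/\Vz_{ww}$ from \eqref{eq:u0.pde} in $w$ and re-expressing the resulting third-derivative combination through $R_w=-1+\Vz_w\Vz_{www}/\Vz_{ww}^2$ yields precisely $z_t=-(\lam_0^2-\tfrac12\lam_0^2 R_w)$. Therefore $\widehat V_t=q_t-(\lam_0^2-\tfrac12\lam_0^2 R_w)q_z$, the $q_z$ contributions from $\widehat V_t$ and from $\Bc_0\widehat V$ cancel, and adding $\Ac_0\widehat V=\Ac_0 q$ delivers $(\pa_t+\Ac_0+\Bc_0)\widehat V=(\pa_t+\Ac_0+\Cc_0)q$. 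The only delicate bookkeeping is matching the two occurrences of $R_w$, one arising from $\D_2$ and the other from the $w$-derivative of the Merton value function through the nonlinear PDE.
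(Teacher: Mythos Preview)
Your proof is correct and follows essentially the same approach as the paper: both compute $\D_1\widehat V=q_z$, $\D_2\widehat V=q_{zz}-R_w q_z$ via the chain rule from $z_w=1/R$, then differentiate the Merton PDE \eqref{eq:u0.pde} in $w$ to evaluate $\Vz_{tw}/\Vz_w$ and obtain the cancellation of the first-order $q_z$ terms. The only organizational difference is that you collect all of $\Bc_0\widehat V$ (including the mixed $\D_1\pa_x,\D_1\pa_y$ pieces) before showing the cancellation, whereas the paper first proves the core identity $(\pa_t+\tfrac12\lam_0^2\D_2+\lam_0^2\D_1)\widehat V=(\pa_t+\tfrac12\lam_0^2\pa_z^2)q$ and then appends the mixed terms; this is a minor reordering with no substantive difference.
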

\begin{proof}
We shall use the shorthand $\Rz(t,w)=R(t,w;\lam_0)$. 
From \eqref{eq:z}, we have that $z_w=1/\Rz$, and so, differentiating \eqref{eq:q}, we find
\begin{equation}
\widehat{V}_t=q_t-\(\frac{\Vz_{tw}}{\Vz_w}+\half\lam_0^2\)q_z,\qquad \D_1\widehat{V}=q_z, \qquad \D_2\widehat{V}=q_{zz}-\Rz_wq_z.
\end{equation}
Then, using the first expression in \eqref{driftdiff} to write the PDE \eqref{eq:u0.pde} for $\Vz$ as $\Vz_t=\tfrac{1}{2}\lam^2_0\D_2\Vz$, and differentiating this with respect to $w$ gives
$$ \Vz_{tw}=\half\lam_0^2\(\Rz\)^2\Vz_{www} + \lam_0^2\Rz\Rz_w\Vz_{ww}. $$
But from $\Rz\Vz_{ww} = -\Vz_w$, we have $(\Rz)^2\Vz_{www}=(\Rz_w+1)\Vz_w$, and so
$$ \Vz_{tw} = \half\lam_0^2(\Rz_w+1)\Vz_w - \lam_0^2\Rz_w\Vz_w, $$
which gives that
\begin{equation}
\frac{\Vz_{tw}}{\Vz_w} =-\half\lam_0^2(\Rz_w-1).
\end{equation}
Therefore, we have
$$ \(\frac{\pa}{\pa t}+\tfrac{1}{2}\lam^2_0 \D_2+ \lam^2_0\D_1\)\widehat{V}=q_t+\(\half\lam_0^2(\Rz_w-1)-\half\lam_0^2\)q_z + \tfrac{1}{2}\lam^2_0\(q_{zz}-\Rz_wq_z\) + \lam_0^2q_z,$$
which establishes that
\begin{equation}
\(\frac{\pa}{\pa t}+\tfrac{1}{2}\lam^2_0 \D_2+ \lam^2_0\D_1\)\widehat{V} = 
\(\frac{\pa}{\pa t} + \tfrac{1}{2}\lam^2_0 \frac{\pa^2}{\pa z^2}\)q. \label{intm}
\end{equation}
More directly, we have
$$ \left(\rho\beta_0\lambda_0\D_1\frac{\pa}{\pa y} + \mu_0\D_1\frac{\pa}{\pa x}\right)\widehat{V} = \left(\rho \beta_0 \lam_0 \frac{\pa^2}{\pa y\pa z}+ \mu_0 \frac{\pa^2}{\pa x\pa z}\right)q,$$
which, combined with \eqref{intm}, leads to \eqref{transf}.
\end{proof}
We define $\qn{0}$ by $\Vn{0}(t,w)=	\qn{0}(t,z(t,w))$.  
Then the PDE \eqref{eq:u0.pde.3} for $\Vn{0}$ is transformed to the (constant coefficient) backward heat equation for $\qn{0}(t,z)$:
\begin{equation}
\(\frac{\pa}{\pa t} + \tfrac{1}{2}\lam^2_0\frac{\pa^2}{\pa z^2}\)\qn{0}=0, \qquad \qn{0}(T,z)=U\((U')^{-1}(e^{-z})\),\label{q0PDE1}
\end{equation}
but of course the transformation \eqref{eq:z} depends on the solution $\Vz$ itself. Again, as $\qn{0}$ does not depend on $(x,y)$, while  $\Ac_0$ and the last two terms in the expression \eqref{eq:C0} for $\Cc_0$ take derivatives in those variables, we can write
\begin{equation}
\(\frac{\pa}{\pa t} + \Ac_0+ \Cc_0\)\qn{0}=0. \label{q0PDE}
\end{equation}

Now let $\qn{1}$ be defined from $\Vn{1}$ by 
\begin{align}
\Vn{1}(t,x,y,w)
	&=	\qn{1}(t,x,y,z(t,w)) , \label{eq:q}
\end{align}
using the transformation \eqref{eq:z}.
Then, using Lemma \ref{princtransf}, we see that the PDE \eqref{Voeqn} for $\Vn{1}$, which has $(t,w)$-dependent coefficients through the dependence of $\Bc_0$ in \eqref{eq:B0} on $R(t,w;\lam_0)$, is transformed to the {\em constant coefficient} equation for $\qn{1}$:
\begin{equation}
\(\frac{\pa}{\pa t} + \Ac_0+ \Cc_0\)\qn{1}+Q_1=0, \qquad \qn{1}(T,x,y,z)=0. \label{q1eqn}
\end{equation}
The source term is found from $H_1(t,x,y,w)=Q_1(t,x,y,z(t,w))$, where, from \eqref{eq:H1}, we have 
\begin{equation}
Q_1(t,x,y,z) = (\tfrac{1}{2}\lam^2)_1(x,y)\qn{0}_z. \label{Q1formula}
\end{equation}

\subsection{Explicit expression for $\Vn{1}$}
\label{sec:explicit}
In this section, we will show that 
$\Vo$, solution of \eqref{Voeqn}, can be written as a differential operator acting on $\Vz$. 
First, we look at the PDE problem
\begin{equation}
\Hc q+Q=0, \qquad q(T,x,y,z)=0, \label{qeqn}
\end{equation}
where $\Hc$ is the constant coefficient linear operator
\begin{equation}
\Hc = \frac{\pa}{\pa t} + \Ac_0+ \Cc_0.
\end{equation}
We also suppose that the source term $Q(t,x,y,z)$ is of the following special form:
\begin{equation}
Q(t,x,y,z) = \sum_{k,l,n}(T-t)^n(x-\xb)^k(y-\yb)^lv(t,x,y,z), \label{Qspec}
\end{equation}
where the sum is finite and $v$ is a solution of the homogeneous equation
\begin{equation}
\Hc v=0. \label{veqn}
\end{equation}

We first define (the commutator) $\Lc_{X}=[\Hc,(x-\xb)I]$ by 
\begin{equation}
 \Hc\((x-\xb)v\) = (x-\xb)\Hc v + \Lc_{X}v, \label{LXprop}
 \end{equation}
and so a direct calculation using the expressions \eqref{A0def} and \eqref{eq:C0} for $\Ac_0$ and $\Cc_0$ respectively shows that
\begin{equation}
\Lc_{X} = (\mu_0-\half\sigma_0^2)I + \sigma_0^2\frac{\pa}{\pa x} + \rho\sigma_0\beta_0\frac{\pa}{\pa y} + \mu_0\frac{\pa}{\pa z},\label{LXformula}
\end{equation}
where $I$ is the identity operator. 
Similarly defining (the commutator) $\Lc_{Y}=[\Hc,(y-\yb)I]$ by 
$$ \Hc\((y-\yb)v\) = (y-\yb)\Hc v + \Lc_{Y}v, $$
leads to 
\begin{align}
\Lc_{Y} &= c_0I + \beta^2_0\frac{\pa}{\pa y} + \rho\sigma_0\beta_0\frac{\pa}{\pa x} + \rho\beta_0\lam_0\frac{\pa}{\pa z}, \label{LYformula} 
\end{align}

We next introduce the following operators indexed by $s\in[t,T]$:
\begin{equation}
\Mc_{X}(s) = (x-\xb)I + (s-t)\Lc_{X}, \qquad 
\Mc_{Y}(s) = (y-\yb)I + (s-t)\Lc_{Y}, 
\label{Mopdef}
\end{equation}
Then we have the following result by construction of these operators.
\begin{lemma}\label{HMlem}
Recall that $v$ solves the homogeneous equation \eqref{veqn}. Then 
\begin{equation} 
\Hc\Mc_{X}^k(s)\Mc_{Y}^l(s)v=0, \label{cascade}
\end{equation}
for integers $k,l$.
\end{lemma}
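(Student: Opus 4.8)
The plan is to reduce the identity \eqref{cascade} to a single structural fact: each of the first-order operators $\Mc_{X}(s)$ and $\Mc_{Y}(s)$ sends a solution of the homogeneous equation $\Hc u=0$ to another solution. Granting this, I would obtain \eqref{cascade} by iteration. Starting from $v$ with $\Hc v=0$, apply $\Mc_{Y}(s)$ successively $l$ times, noting at each stage that the output again lies in the kernel of $\Hc$, to get $\Hc\Mc_{Y}^l(s)v=0$; then apply $\Mc_{X}(s)$ successively $k$ times to $\Mc_{Y}^l(s)v$, remaining in the kernel at every step, to reach $\Hc\Mc_{X}^k(s)\Mc_{Y}^l(s)v=0$. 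This induction on the number of factors never uses any commutation between $\Mc_{X}(s)$ and $\Mc_{Y}(s)$; only that each one individually preserves solutions, so the particular composition order in \eqref{cascade} causes no trouble.

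Everything therefore rests on the claim that $\Hc u=0$ implies $\Hc\Mc_{X}(s)u=0$ (and symmetrically for $\Mc_{Y}(s)$). First I would record the key algebraic input, namely that $\Hc$ commutes with $\Lc_{X}$, i.e. $[\Hc,\Lc_{X}]=0$. This holds because $\Hc=\frac{\pa}{\pa t}+\Ac_0+\Cc_0$ and $\Lc_{X}$ from \eqref{LXformula} are both constant-coefficient differential operators, and $\Lc_{X}$ has no $t$-dependence, so it commutes both with $\frac{\pa}{\pa t}$ and with the constant-coefficient spatial operator $\Ac_0+\Cc_0$.

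Then I would compute $\Hc\Mc_{X}(s)u=\Hc\big((x-\xb)u\big)+\Hc\big((s-t)\Lc_{X}u\big)$ term by term. For the first term, the defining relation \eqref{LXprop} together with $\Hc u=0$ gives $\Hc\big((x-\xb)u\big)=(x-\xb)\Hc u+\Lc_{X}u=\Lc_{X}u$. For the second, writing $g=\Lc_{X}u$ and using that $\Ac_0+\Cc_0$ involves only $(x,y,z)$-derivatives and hence commutes with multiplication by $(s-t)$, while $\frac{\pa}{\pa t}\big((s-t)g\big)=-g+(s-t)g_t$, I obtain $\Hc\big((s-t)g\big)=-g+(s-t)\Hc g=-g+(s-t)\Lc_{X}\Hc u=-g=-\Lc_{X}u$, where the middle equality uses $[\Hc,\Lc_{X}]=0$ and the last uses $\Hc u=0$. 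Adding the two contributions, the $\Lc_{X}u$ terms cancel and $\Hc\Mc_{X}(s)u=0$. The argument for $\Mc_{Y}(s)$ is verbatim, with \eqref{LYformula} and $[\Hc,\Lc_{Y}]=0$ in place of their $X$-analogues.

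The hard part, and really the only point requiring attention, is this exact cancellation, and it is precisely where the definition \eqref{Mopdef} does its work: the coefficient $(s-t)$ is engineered so that the $-1$ produced by $\frac{\pa}{\pa t}$ removes the inhomogeneous term $\Lc_{X}u$ that arises from commuting $\Hc$ past the multiplier $(x-\xb)$. I would therefore present the single-operator claim in full detail and dispose of the general $k,l$ case with the one-line induction described above.
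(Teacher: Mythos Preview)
Your proposal is correct and follows essentially the same route as the paper: both reduce to the single-operator claim that $\Mc_X(s)$ (resp.\ $\Mc_Y(s)$) preserves $\ker\Hc$, prove this via the cancellation between the commutator term $\Lc_X u$ and the $-1$ produced by differentiating $(s-t)$, invoke $[\Hc,\Lc_X]=0$ from constant coefficients, and then iterate. Your write-up is in fact slightly more careful than the paper's in spelling out the $(s-t)$ computation and the order of the induction.
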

\begin{proof}
We first calculate
$$ \Hc\Mc_{X}v = \Mc_{X}\Hc v + \Lc_{X}v - \Lc_{X}v +(s-t)\Hc\Lc_{X}v=(s-t)\Hc\Lc_{X}v, $$
where we have used \eqref{LXprop}. But since $\Hc$ and $\Lc_{X}$ are constant coefficient operators which commute, we have $\Hc\Lc_{X}v=\Lc_{X}\Hc v=0$ using \eqref{veqn}.
Therefore, given a solution $v$ of the homogeneous equation, $\Mc_{X}v$ also solves the homogeneous equation, namely $\Hc\(\Mc_{X}v\)=0$. Iterating we have that $\Hc\Mc_{X}^kv=0$ for integers $k$. Similarly $\Hc\Mc_{Y}^lv=0$ for integers $l$, and so the result \eqref{cascade} follows. 
\end{proof}

From this we can exploit the special structure of the source $Q$ to obtain the following formula.
\begin{proposition}\label{qformulaprop}
The solution to \eqref{qeqn} where the source $Q$ is of the form \eqref{Qspec} is given by
\begin{equation}
q(t,x,y,z) = \sum_{k,l,n}\int_t^T(T-s)^n\Mc_X^k(s)\Mc_Y^l(s)v(t,x,y,z)\, ds. \label{qformula}
\end{equation}
\end{proposition}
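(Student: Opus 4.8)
The plan is to verify the candidate \eqref{qformula} by direct substitution, exploiting linearity to reduce to a single term. Since both $\Hc$ and the solution map $Q\mapsto q$ of \eqref{qeqn} are linear, it suffices to treat a source of the elementary form $Q(t,x,y,z)=(T-t)^n(x-\xb)^k(y-\yb)^l v(t,x,y,z)$ with $\Hc v=0$, and to show that the corresponding summand
\begin{equation}
q(t,x,y,z)=\int_t^T(T-s)^n\Mc_X^k(s)\Mc_Y^l(s)v(t,x,y,z)\,ds
\end{equation}
satisfies $\Hc q+Q=0$ together with the terminal condition $q(T,\cdot)=0$. The terminal condition is immediate, as the integral over the degenerate interval $[T,T]$ vanishes. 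The full statement then follows by summing over $(k,l,n)$.

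Next I would differentiate $q$ in $t$. Writing $w(t,s,x,y,z)=(T-s)^n\Mc_X^k(s)\Mc_Y^l(s)v(t,x,y,z)$, so that $q(t,\cdot)=\int_t^T w(t,s,\cdot)\,ds$, the Leibniz rule for a variable lower limit gives $\pa_t q=-w(t,t,\cdot)+\int_t^T\pa_t w(t,s,\cdot)\,ds$. The boundary term is evaluated from \eqref{Mopdef}: at $s=t$ the $(s-t)$ prefactor kills the $\Lc$ contribution, so the operators collapse to multiplication, $\Mc_X(t)=(x-\xb)I$ and $\Mc_Y(t)=(y-\yb)I$, whence $w(t,t,\cdot)=(T-t)^n(x-\xb)^k(y-\yb)^l v(t,\cdot)=Q$. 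Because $\Ac_0+\Cc_0$ acts only in the spatial variables and hence commutes with $\int_t^T(\cdot)\,ds$, I obtain
\begin{equation}
\Hc q=\pa_t q+(\Ac_0+\Cc_0)q=-Q+\int_t^T\Hc w(t,s,\cdot)\,ds.
\end{equation}

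It then remains to show the integrand is annihilated by $\Hc$, i.e. $\Hc w(t,s,\cdot)=(T-s)^n\,\Hc\,\Mc_X^k(s)\Mc_Y^l(s)v=0$, which is precisely Lemma \ref{HMlem} and \eqref{cascade}. The one point demanding care — and the crux of the whole argument — is that in $\Hc=\pa_t+\Ac_0+\Cc_0$ the time derivative reaches $w$ through two channels: the explicit $t$-dependence of $v(t,\cdot)$ and the factors $(s-t)$ buried inside each copy of $\Mc_X(s)$ and $\Mc_Y(s)$. This is exactly the bookkeeping that the cascade lemma encodes: when $\pa_t$ hits $(s-t)\Lc_X$ it produces a term $-\Lc_X v$ which cancels the commutator term $+\Lc_X v$ arising from $[\Hc,(x-\xb)I]$ in \eqref{LXprop}, so that $\Hc\Mc_X(s)=\Mc_X(s)\Hc$ as operators, and likewise for $\Mc_Y(s)$; iterating and invoking $\Hc v=0$ from \eqref{veqn} gives $\Hc\,\Mc_X^k(s)\Mc_Y^l(s)v=0$. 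Consequently the remaining integral vanishes, leaving $\Hc q=-Q$, which with the terminal condition completes the verification. I expect the only genuine obstacle to be the correct interplay of the Leibniz boundary term with this two-channel time differentiation; once the boundary contribution is matched to $Q$ and the integrand is recognized as a cascade-annihilated function, the identity \eqref{qformula} drops out.
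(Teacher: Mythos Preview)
Your proposal is correct and follows essentially the same route as the paper: reduce by linearity to a single monomial source, apply the Leibniz rule to produce the boundary term $-Q$ (using $\Mc_X(t)=(x-\xb)I$, $\Mc_Y(t)=(y-\yb)I$), and invoke Lemma~\ref{HMlem} to kill the remaining integrand. Your additional remark that in fact $\Hc\Mc_X(s)=\Mc_X(s)\Hc$ as operators is a slightly sharper formulation than the paper's lemma, but the argument is otherwise identical.
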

\begin{proof}
Due to the linearity of the problem, it suffices to consider a single term of the polynomial: 
$$ Q(t,x,y,z) = (T-t)^n(x-\xb)^k(y-\yb)^lv(t,x,y,z). $$
Then, we check that the solution is given by
$$ q(t,x,y,z) =\int_t^T(T-s)^n\Mc_X^k(s)\Mc_Y^l(s)v(t,x,y,z)\, ds $$
by computing
\begin{align}
\Hc q &= -(T-t)^n\Mc_X^k(t)\Mc_Y^l(t)v(t,x,y,z)  + \int_t^T(T-s)^n\Hc\Mc_X^k(s)\Mc_Y^l(s)v(t,x,y,z)\, ds\\
&=-(T-t)^n(x-\xb)^k(y-\yb)^lv(t,x,y,z)\\
&=-Q,
\end{align}
using Lemma \ref{HMlem} for the second term. 
The formula \eqref{qformula} in the general polynomial case follows, and clearly the zero terminal condition is satisfied by \eqref{qformula}. 
\end{proof}
We can now solve for the first correction in the series expansion.
\begin{proposition}\label{q1prop}
The solution to \eqref{q1eqn} is given by
\begin{equation} 
\qn{1}(t,x,y,z) = (T-t)\lam_0A(t, x, y)\,\qn{0}_z(t,z)+ \half(T-t)^2\lam_0B\,\qn{0}_{zz}(t,z),\label{q1formula}
\end{equation}
where
\begin{align}
A(t, x, y) &=  \lambda_{1,0}\left[(x-\bar x)+\half(T-t)(\mu_0-\half\sigma_0^2)\right]
+\lambda_{0,1}\left[(y-\bar y)+\half(T-t)c_0\right], \nonumber\\
 B&=\lambda_{1,0}\mu_0 + \lambda_{0,1}\rho\beta_0\lam_0. \label{ABdef}
\end{align}
\end{proposition}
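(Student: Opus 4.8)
The plan is to obtain \eqref{q1formula} as a direct application of Proposition \ref{qformulaprop}, once the source term $Q_1$ has been cast into the special polynomial form \eqref{Qspec}. The first step is to identify a homogeneous solution hiding inside $Q_1$. I would take $v := \qn{0}_z$ and verify that $\Hc v = 0$. This is immediate: by \eqref{q0PDE} we have $\Hc\qn{0}=0$, and since $\Hc=\pa_t+\Ac_0+\Cc_0$ has constant coefficients it commutes with $\pa_z$, so $\Hc\qn{0}_z = \pa_z\Hc\qn{0}=0$. Thus $v=\qn{0}_z$ satisfies \eqref{veqn}, as required by Proposition \ref{qformulaprop}.

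Next I would rewrite $Q_1$ in the form \eqref{Qspec}. From \eqref{Q1formula} and the definition \eqref{eq:chi.n} of the order-one Taylor coefficient, $(\tfrac12\lam^2)_1(x,y) = (\tfrac12\lam^2)_{1,0}(x-\xb)+(\tfrac12\lam^2)_{0,1}(y-\yb)$. The one point that needs care is converting the coefficients of $\tfrac12\lam^2$ into the coefficients of $\lam$ that appear in \eqref{ABdef}: by the product rule $\d_x(\tfrac12\lam^2)=\lam\lam_x$ and $\d_y(\tfrac12\lam^2)=\lam\lam_y$, so that $(\tfrac12\lam^2)_{1,0}=\lam_0\lam_{1,0}$ and $(\tfrac12\lam^2)_{0,1}=\lam_0\lam_{0,1}$, which is where the overall prefactor $\lam_0$ in \eqref{q1formula} originates. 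Hence $Q_1 = \lam_0\lam_{1,0}(x-\xb)v + \lam_0\lam_{0,1}(y-\yb)v$, which is exactly of the form \eqref{Qspec} with two terms, both having $n=0$ and the same $v=\qn{0}_z$, and with exponents $(k,l)=(1,0)$ and $(0,1)$ respectively.

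Finally, I would invoke Proposition \ref{qformulaprop} and linearity to write $\qn{1} = \lam_0\lam_{1,0}\int_t^T\Mc_X(s)v\,\dd s + \lam_0\lam_{0,1}\int_t^T\Mc_Y(s)v\,\dd s$, and then carry out the remaining (purely computational) evaluation. Using \eqref{Mopdef}, \eqref{LXformula} and \eqref{LYformula}, the essential simplification is that $v=\qn{0}_z$ depends only on $(t,z)$, so $\pa_x v=\pa_y v=0$ and the operators collapse to $\Lc_X v = (\mu_0-\half\sigma_0^2)v+\mu_0 v_z$ and $\Lc_Y v = c_0 v+\rho\beta_0\lam_0 v_z$; consequently $\Mc_X(s)v=(x-\xb)v+(s-t)\Lc_X v$ and similarly for $\Mc_Y(s)v$. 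Performing the elementary integrals $\int_t^T\dd s=(T-t)$ and $\int_t^T(s-t)\,\dd s=\half(T-t)^2$ and collecting the coefficients of $\qn{0}_z$ and $\qn{0}_{zz}$ yields precisely \eqref{q1formula} with $A$ and $B$ as in \eqref{ABdef}; the zero terminal condition holds automatically because the integrals vanish at $t=T$. There is no deep obstacle here: the only place one can slip is the bookkeeping of the preceding paragraph, namely the conversion $(\tfrac12\lam^2)_{1,0}=\lam_0\lam_{1,0}$ and tracking which derivative terms survive when $\Lc_X,\Lc_Y$ act on the $(x,y)$-independent function $\qn{0}_z$.
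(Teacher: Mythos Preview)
Your proposal is correct and follows essentially the same route as the paper: identify $v=\qn{0}_z$ as a homogeneous solution by differentiating \eqref{q0PDE}, recognize $Q_1$ as a sum of the $(k,l)=(1,0)$ and $(0,1)$ terms in \eqref{Qspec} with $n=0$, apply Proposition \ref{qformulaprop}, and then simplify using that $\qn{0}_z$ is independent of $(x,y)$. The only addition is your explicit remark that $(\tfrac12\lam^2)_{1,0}=\lam_0\lam_{1,0}$ and $(\tfrac12\lam^2)_{0,1}=\lam_0\lam_{0,1}$, which the paper leaves implicit but is indeed the source of the $\lam_0$ prefactor in \eqref{q1formula}.
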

\begin{proof}
We observe that since $\qn{0}$ satisfies the homogeneous PDE $\Hc\qn{0}=0$ from \eqref{q0PDE}, so does $\qn{0}_z$, which follows from differentiating the constant coefficient PDE for $\qn{0}$. Then applying  Proposition \ref{qformulaprop} with $v=\qn{0}_z$, and $n=0, (k,l)\in\{(1,0), (0,1)\}$ and substituting the definitions \eqref{Mopdef} for $\Mc_{X}$ and $\Mc_{Y}$ leads to 
\begin{align*} 
\qn{1}(t,x,y,z) &= \left[(\half\lambda^2)_{1,0}\left((T-t)(x-\bar x)+\half(T-t)^2\Lc_X\right)\right.\\
&\left. +(\half\lambda^2)_{0,1}\left((T-t)(y-\bar y)+\half(T-t)^2\Lc_Y\right)\right]\qn{0}_z(t,z).
\end{align*}
Finally, substituting for $\Lc_{X}$ and $\Lc_{Y}$ from \eqref{LXformula}
and \eqref{LYformula} and using that $\qn{0}$ does not depend on $(x,y)$ leads to \eqref{q1formula}. 
\end{proof}

In the original variables, this leads to 
\begin{equation}
\Vn{1}(t,x,y,w) = (T-t)\lam_0A(t, x, y)\,\D_1\Vn{0}(t,w)+ \half(T-t)^2\lam_0B\,\D_1^2\Vn{0}(t,w). \label{V1formula}
\end{equation}

\subsection{Implied Sharpe Ratio\label{ISR}}
In an analogy to implied volatility, for a fixed maturity $T$ and utility function $U$, one can define the \emph{Merton implied Sharpe ratio}\footnote{The authors thank Jean-Pierre Fouque for a number of fruitful discussions,
from which the concept of the Merton implied Sharpe ratio arose.} corresponding to value function $M(t,w;\lambda)$ of Section \ref{Mertonsec} as the unique positive solution $\Lam^\eps(t,x,y,w)$ of
\begin{align}
V^\eps(t,x,y,w)
	&=	M(t,w;\Lam^\eps) . \label{eq:implied.sharpe}
\end{align}
The existence and uniqueness of the implied Sharpe ratio follows from the fact that (i) the function $M$ satisfies $M(t,w) \geq U(w)$, since an investor with initial wealth $w$ can always obtain a terminal utility $U(w)$ by investing all of his money in the riskless bank account, and (ii) the function $M$ is strictly increasing in $\Lam$.  Since a higher implied Sharpe ratio is indicative of a better investment opportunity, we are interested to know how local stochastic volatility model parameters $\{c,\beta,\mu,\sig,\rho\}$ affect the implied Sharpe ratio.

Using our first order approximation $V^\eps\approx\Vn{0}+\eps\Vn{1}$, we look for a corresponding series approximation of the implied Sharpe ratio as
$$\Lam^\eps = \Lam^{(0)} + \eps  \Lam^{(1)} + \cdots. $$
Then, expanding 
$$ M(t,w;\Lam) = M(t,w;\Lam^{(0)}) + \eps \Lam^{(1)}M_\lam(t,w;\Lam^{(0)}) + \cdots, $$
and comparing with the expansion 
$$ V^\eps(t,x,y,w)=M(t,w;\lam_0) + \eps \Vn{1}(t,x,y,w) +\cdots $$
yields $\Lam^{(0)}=\lam_0$ and
\begin{equation}
\Lam^{(1)}=\frac{\Vn{1}(t,x,y,w)}{M_\lam(t,w;\lam^{(0)})}. \label{Lam1}
\end{equation}
Next, from Lemma \ref{vegagammalemma}, we have 
$$M_\lam(t,w;\lam^{(0)})=-(T-t)\lam_0\D_2M(t,w;\lam^{(0)}) = -(T-t)\lam_0\D_2\Vn{0}(t,w)=(T-t)\lam_0\D_1\Vn{0}(t,w). $$
Then, using the formula \eqref{V1formula} for $\Vn{1}$ in \eqref{Lam1} gives
$$ \Lam^{(1)}(t,x,y,w)=A(t, x, y) +\half(T-t)B\,\frac{\D_1^2\Vn{0}(t,w)}{\D_1\Vn{0}(t,w)}. $$
By computing
$$ \frac{\D_1^2\Vn{0}(t,w)}{\D_1\Vn{0}(t,w)}=R_w(t,w;\lam_0)-1, $$
we have
\begin{equation}
\Lam^\eps \approx \Lam^{(0)} + \eps  \Lam^{(1)} =\lam_0 + \eps\left[A(t, x, y) +\half(T-t)B\left(R_w(t,w;\lam_0)-1\right)\right], \label{Lamapprox}
\end{equation}
where $R$ is the Merton risk tolerance function
$$ R(t,w;\lam)=-\frac{M_w(t,w;\lam)}{M_{ww}(t,w;\lam)}. $$

\subsection{Optimal Portfolio\label{optapprox}}
From \eqref{eq:pi}, we have that the optimal strategy $\pi^{\eps,*}$ is given by
\begin{equation}
\pi^{\eps,*} =  -\frac{\mu^\eps(x,y)V^\eps_w}{(\sig^\eps)^2(x,y)V^\eps_{ww}} - \frac{\rho\beta^\eps(x,y)V^\eps_{yw}}{\sig^\eps(x,y)V^\eps_{ww}} - \frac{V^\eps_{xw}}{V^\eps_{ww}}. \label{epspolicy}
\end{equation}
It is convenient in deriving a compact form for our portfolio approximation to write our first order approximation
to the value function as the Merton value function evaluated at the first order series \eqref{Lamapprox} for the Sharpe ratio 
$$ V^\eps(t,x,y,w) \approx \bar V(t,x,t,w):=M(t,w;\lam_0 + \eps  \Lam^{(1)}(t,x,y,w)).$$
Then our approximate first order policy will be to substitute $\bar V$ for $V^\eps$ in \eqref{epspolicy}. 

We have
\begin{eqnarray*}
\bar V_w(t,x,y,w) & = & M_w\(t,w;\lam_0 + \eps  \Lam^{(1)}(t,x,y,w)\) + \eps M_\lam(t,w;\lam_0 )\half(T-t)BR_{ww}(t,w;\lam_0)+ \mathcal{O}(\eps^2),\\
\bar V_{ww}(t,x,y,w) & = & M_{ww}\(t,w;\lam_0 + \eps  \Lam^{(1)}(t,x,y,w)\) + 
\eps\half(T-t)B\left(R_{ww}(t,w;\lam_0)M_\lam(t,w;\lam_0)\right)_w+\mathcal{O}(\eps^2),\\
\bar V_{yw}(t,x,y,w) & = & \eps\lam_{0,1}M_{\lam w}(t,w;\lam_0) + \mathcal{O}(\eps^2),\\
\bar V_{xw}(t,x,y,w) & = & \eps\lam_{1,0}M_{\lam w}(t,w;\lam_0) + \mathcal{O}(\eps^2),
\end{eqnarray*}
where $\mathcal{O}(\eps^2)$ denotes series terms in powers of $\eps^2$ and higher.

Let us compute
\begin{eqnarray*}
- \frac{\bar V_w}{\bar V_{ww}} &=& -\frac{M_w\(t,w;\lam_0 + \eps  \Lam^{(1)}(t,x,y,w)\) + \eps M_\lam(t,w;\lam_0 )\half(T-t)BR_{ww}(t,w;\lam_0)}{M_{ww}\(t,w;\lam_0 + \eps  \Lam^{(1)}(t,x,y,w)\) + 
\eps\half(T-t)B\left(R_{ww}(t,w;\lam_0)M_\lam(t,w;\lam_0)\right)_w}\\
&=& R\(t,w;\lam_0 + \eps  \Lam^{(1)}(t,x,y,w)\)
-\eps\frac{M_\lam}{M_{ww}}\half(T-t)BR_{ww} + \eps\half(T-t)B\frac{M_w}{M_{ww}^2}(R_{ww}M_\lam)_w+ \mathcal{O}(\eps^2).\\
&=& R\(t,w;\lam_0 + \eps  \Lam^{(1)}(t,x,y,w)\)+\eps\half(T-t)^2B\lam_0R^2(R_{ww}+RR_{www}+(R_w-1)R_{ww})+ \mathcal{O}(\eps^2).
\end{eqnarray*}
Here we have used the following identities satisfied by the Merton value function $M(t,w;\lam)$ and its risk tolerance function $R(t,w;\lam)$:
\begin{align}
\frac{M_\lam}{M_{ww}} & =  -(T-t)\lam R^2,\label{id1}\\
\frac{M_w}{M_{ww}^2}M_{\lam} & =  (T-t)\lam R^3,\label{id2}\\
\frac{M_w}{M_{ww}^2}M_{\lam w} & =  (T-t)\lam R^2(R_w-1),\label{id3}
\end{align}
where \eqref{id1} comes from Lemma \ref{vegagammalemma}; \eqref{id2} comes from multiplying \eqref{id1} by $-R$; 
and in the last expression \eqref{id3}, we also use $R^2M_{www}=(R_w+1)M_w$.

Additionally, we compute 
\begin{eqnarray*}
- \frac{\bar V_{yw}}{\bar V_{ww}} & = & -\eps\lam_{0,1}\frac{M_{\lam w}}{M_{ww}}(t,w;\lam_0)
= \eps\lam_{0,1}(T-t)\lam_0R(t,w;\lam_0)\(R_w(t,w;\lam_0)-1\)+ \mathcal{O}(\eps^2), \\
- \frac{\bar V_{xw}}{\bar V_{ww}} & = & -\eps\lam_{1,0}\frac{M_{\lam w}}{M_{ww}}(t,w;\lam_0) =  \eps\lam_{1,0}(T-t)\lam_0R(t,w;\lam_0)\(R_w(t,w;\lam_0)-1\)+ \mathcal{O}(\eps^2).
\end{eqnarray*}

Therefore we have
\begin{align}
\pi^{\eps,*} & \approx  \frac{\mu^\eps(x,y)}{(\sig^\eps)^2(x,y)}\left\{R\(t,w;\lam_0 + \eps  \Lam^{(1)}(t,x,y,w)\)+\half\eps(T-t)^2B\lam_0R^2(RR_{www}+(R+R_w-1)R_{ww})\right\}\nonumber\\
& + \eps(T-t)\lam_0R(R_w-1)\left(\frac{\rho\beta^\eps(x,y)}{\sig^\eps(x,y)}\lam_{0,1}+\lam_{1,0}\right), \label{pistarapprox}
\end{align}
where $R$ without an argument denotes $R(t,w;\lam_0)$. One could substitute the first two terms of the polynomial expansion of the coefficients, but since they are assumed known, there is no loss in accuracy in using the full expressions. The first order approximate optimal strategy is written in terms of the risk tolerance function and its derivatives. 

\subsection{Summary\label{summ}}
We collect here the expressions for our first order approximation formulas, which follow from the prior calculations and setting the accounting parameter $a=1$.
\begin{itemize}
\item Our first order approximation to the value function $V(t,x,y,w)$ in \eqref{eq:v.def}, solution of the PDE problem \eqref{eq:hjb.3} is given by $V(t,x,y,w)\approx \bar V(t,x,y,w)$, where
\begin{align}
 \bar V(t,x,y,w) & = \Vz(t,w) + \Vo(t,x,y,w)\\
 &= M(t,w;\lam_0) + \((T-t)\lam_0A(t, x, y)\,\D_1+ \half(T-t)^2\lam_0B\,\D_1^2\)M(t,w;\lam_0), 
\end{align}
and $A$ and $B$ are given in \eqref{ABdef}.

\item The implied Sharpe ratio $\Lam=\Lam(t,x,y,w)$ defined by $V(t,x,y,w)=M(t,w;\Lam)$ is approximated to first order by
$\Lam\approx\bar\Lam$, where
\begin{align}
\bar\Lam(t,x,y,w) &=\Lam^{(0)} +   \Lam^{(1)}\\
&=\lam_0 + A(t, x, y) +\half(T-t)B\left(R_w(t,w;\lam_0)-1\right).\label{ISRsumm}
\end{align}

\item Our first order approximation to the optimal strategy $\pi^*(t,x,y,w)$ in \eqref{eq:pi} is given by $\pi^*\approx\bar\pi$, where 
\begin{align}
\bar\pi(t,x,y,w) &= \frac{\mu(x,y)}{\sig^2(x,y)}\left\{R\(t,w;\lam_0 +   \Lam^{(1)}(t,x,y,w)\)\right.\nonumber\\
&\qquad\qquad\quad+\half(T-t)^2B\lam_0R^2(RR_{www}+(R+R_w-1)R_{ww})\Big\}\nonumber\\
& + (T-t)\lam_0R(R_w-1)\left(\frac{\rho\beta(x,y)}{\sig(x,y)}\lam_{0,1}+\lam_{1,0}\right).\label{pistarapprox1}
\end{align}
This formula has principle term that is the classical Merton strategy $-\frac{\mu}{\sigma^2}R$, but here is updated to account for LSV by using the current $\mu(x,y)$ and $\sigma(x,y)$ values, and with the implied Sharpe ratio in the risk tolerance function. The other terms contain effects of correlation $\rho$, volatility of volatility  $\beta$, higher Taylor expansion terms of the stochastic Sharpe ratio, and higher derivatives of the the risk tolerance function with respect to wealth. Even for a utility function where there is no explicit solution for the constant parameter Merton value function $M$, the risk tolerance is easily computed by numerically  solving Black's equation, as detailed in \cite[Section 6.2]{fouque-sircar-zariphopoulou-2012}.   
\end{itemize}

\section{Higher Order Terms\label{higher}}
Having obtained PDEs for $\Vn{0}$ and $\Vn{1}$, we examine the higher order terms.  
An exercise in accounting shows that for all $n\geq 1$ the function $\Vn{n}(t,x,y,w)$ satisfies a linear PDE of the form
\begin{equation}
\(\frac{\pa}{\pa t} + \Ac_0+ \Bc_0\) \Vn{n} + H_n =0, \qquad \Vn{n}(T,x,y,w)=0,\label{Vneqn}
\end{equation}
where the source term $H_n$ depends only on $\Vn{k}$ $(k \leq n-1)$.  To see this, observe that the $n$th-order PDE involves three types of terms
\begin{align}
\Oc(\eps^n):&&
&\Vn{n}_t , &
&\Ac_k \Vn{n-k}, \quad ( k \leq n ) , &
\sum_{j+k+l+m=n} \chi_j \, \Vn{k}_{\alpha w}\Vn{l}_{\gamma w}\(\frac{1}{V^\eps_{ww}}\)_m, \label{eq:terms}
\end{align}
where, in the last term, $\chi$ is a place holder for one of the coefficient functions appearing in $\Nc^\eps$, the symbols $(\alpha,\del)$ are place holders for $(x,y)$ or null (meaning just a single derivative in $w$), and $\(\frac{1}{V^\eps_{ww}}\)_m$ is the $m$th order term in the Taylor series expansion of $\(\frac{1}{V^\eps_{ww}}\)$ about the point $\eps = 0$, i.e.,
\begin{align}
\(\frac{1}{V^\eps_{ww}}\)
	&=	\frac{1}{\Vn{0}_{ww}} 
			+ \sum_{k=1}^\infty \eps^k  \(\frac{1}{V^\eps_{ww}}\)_k , &
\(\frac{1}{V^\eps_{ww}}\)_k
	&=	\sum_{m=1}^k \frac{(-1)^m }{(\Vn{0}_{ww})^{1+m} }
			\( \sum_{i \in I_{k,m}} \prod_{j=1}^m \Vn{i_j}_{ww} \) , \label{eq:Taylor}
\end{align}
where $I_{k,m}$ is given by
\begin{align}
I_{k,m}
	&=	\{ i=(i_1, i_2, \cdots, i_m ) \in \mathbb{N}^m: \sum_{j=1}^m i_j = k \} . \label{eq:Ikm}
\end{align}
The terms in \eqref{Vneqn} that involve $\Vn{n}$ are precisely those terms that appear in $(\frac{\pa}{\pa t} + \Ac_0 + \Bc_0)\Vn{n}$.  The terms that do not involve $\Vn{n}$ are grouped into the source term $H_n$.  We provide here an explicit expression for the second order source term $H_2$, which appears in the $\Oc(\eps^2)$ PDE:
\begin{align}
H_2
	&=	- (\tfrac{1}{2}\lam^2)_2 \frac{(\Vn{0}_w)^2}{\Vn{0}_{ww}} 
			- ( H_1 + \Bc_0  \Vn{1})  \cdot \frac{\Vn{1}_{ww}}{\Vn{0}_{ww}} 
			+	\Ac_1 \Vn{1}
			- (\tfrac{1}{2}\lam^2)_0 \frac{(\Vn{1}_{w})^2}{\Vn{0}_{ww}}  
			\\ & \qquad
			- 2 (\tfrac{1}{2}\lam^2)_1 \frac{(\Vn{0}_w)(\Vn{1}_{w})}{\Vn{0}_{ww}}   
			- (\rho \beta \lam)_0 \frac{(\Vn{1}_{w})(\Vn{1}_{yw})}{\Vn{0}_{ww}}  
			- (\rho \beta \lam)_1 \frac{(\Vn{0}_w)(\Vn{1}_{yw})}{\Vn{0}_{ww}} 
			\\ & \qquad
			- (\tfrac{1}{2} \rho^2 \beta^2)_0 \frac{(\Vn{1}_{yw})^2}{\Vn{0}_{ww}} 
			- \mu_0 \frac{(\Vn{1}_{w})(\Vn{1}_{xw})}{\Vn{0}_{ww}}  
			- \mu_1 \frac{(\Vn{0}_w)(\Vn{1}_{xw})}{\Vn{0}_{ww}} 
			\\ & \qquad
			- (\rho \sig \beta)_0 \frac{(\Vn{1}_{xw})(\Vn{1}_{yw})}{\Vn{0}_{ww}} 
			- (\tfrac{1}{2} \sig^2)_0 \frac{(\Vn{1}_{xw})^2}{\Vn{0}_{ww}}  .
			\label{eq:H2}
\end{align}
Higher-order sources terms can be obtained systematically using a computer algebra program such as Wolfram Mathematica. 

Now let $\qn{n}$ be defined from $\Vn{n}$ by 
\begin{align}
\Vn{n}(t,x,y,w)
	&=	\qn{n}(t,x,y,z(t,w)) , \label{eq:qn}
\end{align}
using the transformation \eqref{eq:z}. Then, using Lemma \ref{princtransf}, we see that the PDE \eqref{Vneqn} for $\Vn{n}$, which has $(t,w)$-dependent coefficients through the dependence of $\Bc_0$ in \eqref{eq:B0} on $R(t,w;\lam_0)$, is transformed to the {\em constant coefficient} equation for $\qn{n}$:
\begin{equation}
\(\frac{\pa}{\pa t} + \Ac_0+ \Cc_0\)\qn{n}+Q_n=0, \qquad \qn{n}(T,x,y,z)=0. \label{qneqn}
\end{equation}
The source term is found from $H_n(t,x,y,w)=Q_n(t,x,y,z(t,w))$.

We must establish that, for every $n \geq 1$ there exists a function $Q_n$ such that $Q_n(t,x,y,z(t,w))=H_n(t,x,y,w)$.  
From \eqref{eq:terms} we see that the source term $H_n$ contains two types of terms, the first of which is
$\Ac_k \Vn{n-k}$ $(1 \leq k \leq n)$.  Since $\Ac_k$ acts only on $(x,y)$, we have that $\Ac_k \Vn{n-k} = \Ac_k \qn{n-k}$.  The second sort of term appearing in \eqref{eq:terms} are those of the form
\begin{align}
&\sum_{j+k+l+m=n}\chi_j \, \frac{\Vn{k}_{\alpha w}\Vn{l}_{\gamma w}}{\Vn{0}_{ww}}
			 \sum_{p=1}^m (-1)^p  \( \sum_{i \in I_{k,p}} \prod_{j=1}^p \frac{\Vn{i_j}_{ww}}{\Vn{0}_{ww}} \) , & 
& k,l,m \leq n-1, \label{eq:form}
\end{align}
where we have used \eqref{eq:Taylor}.

Next, using
\begin{align}
\frac{\Vn{k}_{\alpha w}\Vn{l}_{\gamma w}}{\Vn{0}_{ww}}
&= -\frac{\qn{k}_{\alpha z}\qn{l}_{\gamma z}}{\qn{0}_{z}} , &
\frac{ \Vn{i}_{ww} }{\Vn{0}_{ww}} & = \frac{-\qn{i}_{zz}}{\qn{0}_{z}} 
			+ \frac{(\qn{0}_{z}+ \qn{0}_{zz})\qn{i}_{z}}{(\qn{0}_{z})^2} ,
\end{align}
we see that \eqref{eq:form} can be written as
\begin{align}
\sum_{j+k+l+m=n}\chi_j \,\frac{-\qn{k}_{\alpha z}\qn{l}_{\gamma z}}{\qn{0}_{z}}
			 \sum_{p=1}^m (-1)^p  \( \sum_{i \in I_{k,p}} \prod_{j=1}^p \( \frac{-\qn{i_j}_{zz}}{\qn{0}_{z}} 
			+ \frac{(\qn{0}_{z}+ \qn{0}_{zz})\qn{i_j}_{z}}{(\qn{0}_{z})^2} \) \) , 
 \label{eq:q.form}
\end{align}
where $(l,k,m \leq n-1)$.
We have therefore established that, for every $n \geq 1$, the source term $H_n(t,x,y,w)$, which is composed of products and quotients of derivatives of $\Vn{k}(t,x,y,w)$ $(k \leq n-1)$, can be written be written as a function $Q_n$, which is composed of products and quotients of derivatives of $\qn{k}(t,x,y,z)$ $(k \leq n-1)$.

In Proposition \ref{q1prop}, we saw that $\qn{1}$, the first-order transformed value function, can  be expressed as a differential operator acting on $\qn{0}$, specifically $\qn{1} = \Lc_1\qn{0}$, where
\begin{equation}
 \Lc_1=\left[(T-t)(\tfrac{1}{2}\lam^2)_1(x,y)I 
 +\half(T-t)^2\left((\half\lambda^2)_{1,0}\Lc_X +(\half\lambda^2)_{0,1}\Lc_Y\right)\right]
\frac{\pa}{\pa z}.\label{opL1}
\end{equation}
We will show that, for certain utility functions $U$, each of the higher order terms $\qn{n}$ $(n \geq 2)$ can also be written as a differential operator acting on $\qn{0}$.
From Proposition \ref{qformulaprop}, we know that if the source $Q_n$ in the $n$th order PDE \eqref{qneqn} is of the form 
\eqref{Qspec}, then this will be the case. 
From \eqref{Q1formula}, we see that
\begin{equation}
Q_1=\Qc_1 q_0 \qquad \mbox{where}\qquad \Qc_1=(\tfrac{1}{2}\lam^2)_1(x,y)\frac{\pa}{\pa z}, \label{opQ1}
\end{equation}
Unfortunately, this is not always the case.  

To see this, we examine $Q_2$, the source term in the PDE for $\qn{2}$, which one can compute:
\begin{align}
Q_2
	&=	(\tfrac{1}{2}\lam^2)_2 \qn{0}_z
			- \(\frac{(\qn{0}_z + \qn{0}_{zz})\qn{1}_{z}}{(\qn{0}_z)^2} - \frac{\qn{1}_{zz}}{\qn{0}_z} \)
			( Q_1 + \Cc_0  q_1 )
						+	\Ac_1 q_1
			\\ & \qquad
			+ (\tfrac{1}{2}\lam^2)_0 \frac{(\qn{1}_{z})^2}{ \qn{0}_z}
			+ 2 (\tfrac{1}{2}\lam^2)_1 \qn{1}_{z}  
			+ (\rho \beta \lam)_0 \frac{\qn{1}_{z} \qn{1}_{yz}}{ \qn{0}_z}
			\\ & \qquad
			+ (\rho \beta \lam)_1\qn{1}_{yz}
			+ (\tfrac{1}{2} \rho^2 \beta^2)_0 \frac{\qn{1}_{yz} \qn{1}_{yz}}{ \qn{0}_z} 
			+ \mu_0 \frac{\qn{1}_{z}\qn{1}_{xz}}{ \qn{0}_z}
			\\ & \qquad
			+ \mu_1 \qn{1}_{xz}
			+ (\rho \sig \beta)_0 \frac{\qn{1}_{xz} \qn{1}_{yz}}{ \qn{0}_z}
			+ (\tfrac{1}{2} \sig^2)_0 \frac{\qn{1}_{xz} \qn{1}_{xz}}{ \qn{0}_z} . \label{eq:Q2}
\end{align}
From \eqref{eq:Q2} we see that $Q_2$ can be written as $Q_2=\Qc_2 q_0$ where
\begin{align}
\Qc_2
	&=	(\tfrac{1}{2}\lam^2)_2 \frac{\pa}{\pa z}
			- \(\frac{(\qn{0}_z + \qn{0}_{zz})\qn{1}_{z}}{(\qn{0}_z)^2} - \frac{\qn{1}_{zz}}{\qn{0}_z} \)
			( \Qc_1 + \Cc_0  \Lc_1 )
			+	\Ac_1 \Lc_1
			\\ & \qquad
			+ (\tfrac{1}{2}\lam^2)_0 \frac{(\qn{1}_{z})}{ \qn{0}_z} \frac{\pa}{\pa z} \Lc_1
			+ 2 (\tfrac{1}{2}\lam^2)_1 \frac{\pa}{\pa z} \Lc_1  
			+ (\rho \beta \lam)_0 \frac{(\qn{1}_{yz})}{ \qn{0}_z} \frac{\pa}{\pa z} \Lc_1
			\\ & \qquad
			+ (\rho \beta \lam)_1 \frac{\pa^2}{\pa y\pa z} \Lc_1
			+ (\tfrac{1}{2} \rho^2 \beta^2)_0 \frac{(\qn{1}_{yz})}{ \qn{0}_z} 
				\frac{\pa^2}{\pa y\pa z} \Lc_1
			+ \mu_0 \frac{(\qn{1}_{xz})}{ \qn{0}_z} \d_z \Lc_1
			\\ & \qquad
			+ \mu_1 \frac{\pa^2}{\pa x\pa z} \Lc_1
			+ (\rho \sig \beta)_0 \frac{(\qn{1}_{xz})}{ \qn{0}_z} \frac{\pa^2}{\pa y\pa z} \Lc_1
			+ (\tfrac{1}{2} \sig^2)_0 \frac{(\qn{1}_{xz})}{ \qn{0}_z} \frac{\pa^2}{\pa x\pa z} \Lc_1 ,	\label{eq:Qc.2}
\end{align}
where $\Qc_1$ was given in \eqref{opQ1}, and $\Lc_1$ in \eqref{opL1}.

In order to use Proposition \ref{qformulaprop}, we must establish that coefficients of $\Qc_2$ are polynomials in $(x,y,z)$. The complicating terms are those that contain derivatives of $\qn{0}$ and $\qn{1}$ divided by $\qn{0}_z$.  Such terms are always polynomials in $(x,y)$, but may not be polynomial in $z$.  The following lemma provides conditions under which
the differential operator $\Qc_n$ is guaranteed to have coefficients that are 
independent of $z$:
\begin{lemma}
\label{thm:exp}
Suppose $\qn{0}(t,z)$ is of the form:
\begin{align}
\qn{0}(t,z) = a(t)\ee^{b(t) + z c(t)}.  \label{eq:exp.form}
\end{align}
Then, for every $n \geq 1$, the 
source term $Q_n$ appearing in PDE \eqref{qneqn} can be written as $Q_n = \Qc_n \qn{0}$, where the differential operator $\Qc_n$ has coefficients that are polynomial in $(x,y)$ and independent of $z$.  
\end{lemma}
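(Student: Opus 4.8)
The plan is to prove the statement by induction on $n$, using a strengthened hypothesis that also controls the structure of the transformed correction terms themselves. Specifically, I would carry along the claim that, in the exponential regime \eqref{eq:exp.form}, each $\qn{k}$ factorizes as $\qn{k}(t,x,y,z)=p_k(t,x,y)\,\qn{0}(t,z)$ with $p_k$ polynomial in $(x,y)$ (and $p_0\equiv 1$). The single elementary fact that drives everything is that for the exponential profile one has $\pa_z^{\,j}\qn{0}=c(t)^j\,\qn{0}$ for every $j$, so that any $z$-derivative of $\qn{0}$ reproduces $\qn{0}$ up to a $z$-independent factor. The base cases $k=0,1$ are immediate: $p_0\equiv1$, and substituting $\qn{0}_z=c\,\qn{0}$, $\qn{0}_{zz}=c^2\,\qn{0}$ into the explicit formula \eqref{q1formula} gives $\qn{1}=p_1\,\qn{0}$ with $p_1=(T-t)\lam_0 c\,A+\half(T-t)^2\lam_0 c^2 B$, which is polynomial in $(x,y)$ since $A$ is affine and $B$ constant.

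For the inductive step, assume $\qn{k}=p_k\qn{0}$ for all $k\le n-1$ and examine the source $Q_n$ through its explicit description in \eqref{eq:q.form} together with the linear pieces $\Ac_k\qn{n-k}$. The main calculation is to show that every ``complicating'' quotient collapses. Substituting the factorization and $\pa_z^{\,j}\qn{0}=c^j\qn{0}$, one checks term by term that
$$\frac{\qn{k}_{\alpha z}\qn{l}_{\gamma z}}{\qn{0}_{z}}=c\,(D_\alpha p_k)(D_\gamma p_l)\,\qn{0},\qquad -\frac{\qn{i}_{zz}}{\qn{0}_{z}}+\frac{(\qn{0}_z+\qn{0}_{zz})\qn{i}_z}{(\qn{0}_z)^2}=p_i,$$
where $D_\alpha$ denotes the identity or $\pa_x,\pa_y$ according to the placeholder $\alpha$. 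Thus in each summand of \eqref{eq:q.form} the denominators cancel all but a single factor of $\qn{0}$, and what multiplies that $\qn{0}$ is a product of the $p$'s and of $(x,y)$-derivatives of the $p$'s---hence polynomial in $(x,y)$ and independent of $z$. The linear terms $\Ac_k\qn{n-k}=(\Ac_k p_{n-k})\,\qn{0}$ are handled identically, since $\Ac_k$ differentiates only in $(x,y)$ and its coefficients are polynomial by \eqref{eq:An}. Collecting these, $Q_n=P_n(t,x,y)\,\qn{0}$ for some $P_n$ polynomial in $(x,y)$, which is exactly the assertion $Q_n=\Qc_n\qn{0}$ with $\Qc_n$ a ($z$-independent, $(x,y)$-polynomial-coefficient) operator.

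To close the induction I would solve \eqref{qneqn} with this source by the ansatz $\qn{n}=g(t,x,y)\,\qn{0}$. Using that $\Hc$ has constant coefficients and that $\qn{0}$ is independent of $(x,y)$ with $\Hc\qn{0}=0$ (equivalently $\qn{0}_t=-\half\lam_0^2 c^2\qn{0}$), a direct computation shows the contribution of $\qn{0}_t$ cancels the $\half\lam_0^2\pa_z^2$ contribution, reducing $\Hc(g\,\qn{0})=-Q_n$ to the scalar parabolic problem $\widetilde{\Hc}g=-P_n$, $g(T,x,y)=0$, where $\widetilde{\Hc}=\pa_t+\Ac_0+\mu_0 c\,\pa_x+\rho\beta_0\lam_0 c\,\pa_y$ acts only on functions of $(t,x,y)$ and has coefficients independent of $(x,y)$. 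Expanding $g$ and $P_n$ in powers of $(x-\xb),(y-\yb)$ turns this into a finite triangular system of linear ODEs for the time-dependent coefficients (the spatial derivatives only lower total degree, while $\pa_t$ preserves it), solvable by backward integration from the zero terminal data. Hence $g=p_n$ is polynomial in $(x,y)$ and $\qn{n}=p_n\qn{0}$, completing the induction.

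I expect the main obstacle to be the bookkeeping in the second paragraph: verifying that across all the nested products and sums in \eqref{eq:q.form} the powers of $\qn{0}$ in the numerators exactly balance the powers of $\qn{0}_z$ in the denominators, so that precisely one factor $\qn{0}$ survives and no residual $z$-dependence remains. This balance is special to the exponential profile---for a general $\qn{0}$ the ratio $\qn{0}_z/\qn{0}$ would itself depend on $z$ and the cancellations would fail---so the crux is to organize the computation, ideally via the two identities displayed above applied factorwise, so that the collapse is manifest rather than argued case by case.
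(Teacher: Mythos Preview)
Your proof is correct and takes a genuinely different, more hands-on route than the paper's. Both arguments proceed by induction, but the paper carries the weaker inductive hypothesis that $\qn{k}=\Lc_k\qn{0}$ for a differential operator $\Lc_k$ with $(x,y)$-polynomial, $z$-independent coefficients, and then invokes Proposition~\ref{qformulaprop} (the $\Mc_X,\Mc_Y$ machinery) to pass from $\Qc_n$ to $\Lc_n$. You instead carry the strengthened hypothesis $\qn{k}=p_k(t,x,y)\,\qn{0}$, which is equivalent in the exponential regime since $\d_z^{\,j}\qn{0}=c(t)^j\qn{0}$ collapses any such $\Lc_k$ to multiplication by a polynomial. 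The payoff of your formulation is that the two displayed identities make the cancellation of the problematic quotients in \eqref{eq:q.form} completely explicit (in particular, the observation that $\Vn{i}_{ww}/\Vn{0}_{ww}$ reduces to $p_i$ is a clean simplification the paper leaves implicit), and your closing step---the ansatz $\qn{n}=g\,\qn{0}$ reducing to a constant-coefficient parabolic equation in $(t,x,y)$ alone, solved by a finite triangular ODE system---bypasses Proposition~\ref{qformulaprop} entirely. The paper's version has the advantage of reusing machinery already in place and of yielding the operator $\Lc_n$ itself (needed elsewhere); your version is more self-contained and makes transparent exactly where the exponential assumption is used.
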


\begin{proof}
We will prove by induction on $n$ that there exists a differential operator $\Qc_n$ whose coefficients are polynomial in $(x,y)$, independent of $z$, and which satisfies $Q_n = \Qc_n \qn{0}$, where $Q_n$ is the $n$th-order source term appearing in \eqref{qneqn}.  We know from \eqref{opQ1} that such a $\Qc_1$ exists.  We now assume such $\Qc_k$ exist for all $1 \leq k \leq n-1$, and we show that $\Qc_n$ exists and has the required form.  

The existence $\Qc_k$ implies from Proposition \ref{qformulaprop} that there exists an operator $\Lc_k$ such that $\qn{k}=\Lc_k \qn{0}$.  Moreover, since $\Qc_k$ is polynomial in $(x,y)$ and independent of $z$ it follows from \eqref{qformula} that $\Lc_k$ has coefficients that are polynomial in $(x,y)$ and independent of $z$.  Now, we recall 
that $Q_n$ contains two types of terms: $\Ac_k \qn{n-k}$ $(1\leq k \leq n)$ and terms of the form \eqref{eq:q.form}.  Let us first examine terms of the form $\Ac_k \qn{n-k}$ $(1\leq k \leq n)$.  Note that
\begin{align}
\Ac_k \qn{n-k}
	&=	\Ac_k \Lc_{n-k} \qn{0} , &
1 
	&\leq k \leq n .
\end{align}
The coefficients of $\Ac_k$ are polynomial in $(x,y)$ and independent of $z$ by construction.  Hence, the coefficients of $\Ac_k \Lc_{n-k}$ are also polynomial in $(x,y)$ and independent of $z$.  Now, let us examine the terms of the form \eqref{eq:q.form}.  Using $\qn{k} = \Lc_k \qn{0}$ we can express \eqref{eq:q.form} as
\begin{align}
\sum_{j+k+l+m=n}\chi_j \, \frac{-(\frac{\pa^2}{\pa\alpha\pa z} \Lc_k \qn{0}) \, 
(\frac{\pa^2}{\pa\gamma\pa z}\Lc_l \qn{0})}{\qn{0}_z}
			 \sum_{p=1}^m (-1)^p  \( \sum_{i \in I_{k,p}} \prod_{j=1}^p \( \frac{(\qn{0}_z + \qn{0}_{zz})(\d_z \Lc_{i_j}\qn{0})}{(\qn{0}_z)^2} - \frac{\frac{\pa^2}{\pa z^2} \Lc_{i_j} \qn{0}}{\qn{0}_z} \) \) ,  \label{eq:temp}
\end{align}
where $l,k,m \leq n-1$.  Since, by assumption, $\qn{0}$ is of the form \eqref{eq:exp.form}, it follows that
terms of the form \eqref{eq:temp} are polynomials in $(x,y)$ and independent of $z$.
We have therefore established that $Q_n$ can be written as $\Qc_n \qn{0}$ where $\Qc_n$ is a differential operator whose coefficients are polynomial in $(x,y)$ and independent of $z$.
\end{proof}

We will see in the next section that, when $U$ belongs to the power utility class, then $\qn{0}$ is of the form \eqref{eq:exp.form}.  Thus, the $n$th-order term $\qn{n}$ can be written as a differential operator $\Lc_n$ acting on $\qn{0}$. 

\section{Specific results for power utility}
\label{sec:examples}
In this section, we consider the case where the utility function $U$ belongs to the power utility class
\begin{align}
\text{Power utility:}&&
U(w)
	&=	\frac{w^{1-\gam}}{1-\gam} , &
\gam
	&>0, \,\,\gam\neq1 . \label{eq:CRRA}
\end{align}
For general LSV dynamics \eqref{dXdY}, we will obtain the second-order approximation for the value function $u$, optimal investment strategy $\pi^*$ and implied Sharpe ratio $\Lam$.  Then, we will establish error estimates for the approximate value function in a stochastic volatility setting.

\subsection{Value function}
To obtain the second order approximation to the value function $V$, we must first compute $\qn{0}$, $\qn{1}$ and $\qn{2}$.  
With $U$ given by \eqref{eq:CRRA}, we have $U'(w) = w^{-\gam}$ and $[U']^{-1}(\zeta)=\zeta^{-1/\gam}$.  
As found in \cite{merton1969lifetime}, we have 
\begin{align}
\Vn{0}(t,w)
	&=M(t,w;\lam_0)=	\frac{w^{1-\gam}}{1-\gam}\,\exp\(\tfrac{1}{2}\(\frac{1-\gam}{\gam}\)\lam^2_0(T-t)\)  .\label{Mertonftn}
\end{align} 
Then the transform variable $z$ in \eqref{eq:z} is given by 
\begin{align}
z(t,w)
	&=	\gam \log w + \tfrac{1}{2}\(\tfrac{2 \gam - 1}{\gam}\)\lam^2_0  (T-t), \label{eq:z.CRRA}
\end{align}
and the solution of the heat equation PDE problem \eqref{q0PDE1} is 
\begin{align}
\qn{0}(t,z)
		&=		\frac{1}{1-\gam}
			\exp \( \frac{1-\gam}{\gam} \( z + \tfrac{1}{2}\(\tfrac{1-\gam}{\gam}\)\lam^2_0(T-t)\) \) . \label{eq:q0.crra}
\end{align}
Next, using \eqref{q1formula} and \eqref{eq:q0.crra}, we compute $\qn{1}$: 
\begin{align}
\qn{1}(t,x,y,z)
	&=	\frac{1-\gam}{\gam} 
			\bigg(
				\( \tfrac{1}{2} \lam^2 \)_{1,0} \( (T-t)(x-\xb) + \tfrac{1}{2}(T-t)^2 
				\( \tfrac{1}{\gam} \mu_0 - \tfrac{1}{2} \sig_0^2 \) \) \\ 
				&\qquad \qquad
				+\( \tfrac{1}{2} \lam^2\)_{0,1} \( (T-t)(y-\yb) + \tfrac{1}{2}(T-t)^2 
				\( c_0 + \tfrac{1-\gam}{\gam} \rho \beta_0 \lam_0 \) \) 
			\bigg)\qn{0}(t,z) ,	\label{eq:q1.CRRAa}
\end{align}
from which we obtain 
\begin{align}
\Vn{1}(t,x,y,z)
	&=	\frac{1-\gam}{\gam} 
			\bigg(
				\( \tfrac{1}{2} \lam^2 \)_{1,0} \( (T-t)(x-\xb) + \tfrac{1}{2}(T-t)^2 
				\( \tfrac{1}{\gam} \mu_0 - \tfrac{1}{2} \sig_0^2 \) \)\label{eq:u1.CRRA} \\ 
				&\qquad \qquad
				+\( \tfrac{1}{2} \lam^2\)_{0,1} \( (T-t)(y-\yb) + \tfrac{1}{2}(T-t)^2 
				\( c_0 + \tfrac{1-\gam}{\gam} \rho \beta_0 \lam_0 \) \) 
			\bigg)\Vn{0}(t,z) ,	
\end{align}
where, as a reminder, $(\tfrac{1}{2}\lam^2)_{1,0}$ and $(\tfrac{1}{2}\lam^2)_{0,1}$ are given by
\begin{align}
(\tfrac{1}{2}\lam^2)_{1,0}
	&= 	 \( \frac{\mu^2}{2\sig^2} \)_x(\xb,\yb) , &
(\tfrac{1}{2}\lam^2)_{0,1}
	&=	\( \frac{\mu^2}{2\sig^2} \)_y(\xb,\yb) .
\end{align}

Having obtained an explicit expressions for $\qn{0}$ and $\qn{1}$, we can now compute $\qn{2}$. The second order source term $Q_2$, given by \eqref{eq:Q2}, can be written as $Q_2 = \Qc_2 q_0$ where the operator $\Qc_2$ is given by \eqref{eq:Qc.2}.  From \eqref{eq:q0.crra}, we see that $\qn{0}$ is of the form \eqref{eq:exp.form}.  Thus, from Lemma \ref{thm:exp} we know that the coefficients of $\Qc_2$ are polynomial in $(x,y)$ and independent of $z$.  Therefore, we can use Proposition \ref{qformulaprop} to compute $\qn{2} = \Lc_2 q_0$. The expression for $\qn{2}$ is quite long.  As such, for the sake of brevity, we do not include it here. 

We can obtain $\Vn{2}$ from $\qn{2}$ using \eqref{eq:q} and \eqref{eq:z.CRRA}. The same procedure can be used to compute higher-order terms: $\qn{n}$ $(n \geq 3)$.  Since the expressions for $\un{2}$ and higher-order terms are quite long, we do not present them here.  However, in the numerical examples that follow, we do compute the second order approximation, and we will see that it provides a noticeably more accurate approximation of $V$ than does the first order approximation.

\subsection{Optimal Strategy}
For power utility, the Merton risk tolerance function is especially simple: $R(t,w;\lam)=w/\gam$, and it does not depend on $t$, $T$ or the Sharpe ratio $\lam$. Therefore, the approximate first order optimal strategy in \eqref{pistarapprox1} is given by $ \pi^*\approx \bar\pi$, where
\begin{equation}
\bar\pi(t,x,y,w) = \left[\frac{\mu(x,y)}{\sig^2(x,y)} + 
(T-t)\lam_0\(\frac1\gam-1\)\left(\frac{\rho\beta(x,y)}{\sig(x,y)}\lam_{0,1}+\lam_{1,0}\right)\right]\,\frac{w}{\gam}, \label{pistarapproxpower}
\end{equation}
which is also proportional to the current wealth level $w$ as in the classical Merton strategy, but with proportion that varies with the model coefficients whose values move with the log stock price $x$ and the volatility driving factor $y$.  

We remark that it is also possible to compute the next order of the strategy approximation $\pi_2^*$ in the case of power utility using the lengthy expression for $\Vn{2}$.  
For the special case $(x,y)=(\xb,\yb)$, we have
\begin{align}
\pi_2^*
	&=	w\times\left[ \frac{(T-t)^2(\gam-1)}{2 \gam^3}
			\Big(
			(\gam - 1 ) (\tfrac{1}{2}\lam^2)_{0,1} (\rho \beta \lam)_{1,0}+\gamma  (\tfrac{1}{2}\lam^2)_{1,0} (\tfrac{1}{2} \sig^2)_{1,0}\right.
			\\ 
			& \qquad
			-\left(\gamma  c_{1,0} (\tfrac{1}{2}\lam^2)_{0,1}+\left(\gamma  c_0-(\gam - 1 ) (\rho \beta \lam)_0\right) (\tfrac{1}{2}\lam^2)_{1,1}+2 \mu _0 (\tfrac{1}{2}\lam^2)_{2,0}-\gamma  \sigma _0^2 (\tfrac{1}{2}\lam^2)_{2,0}+(\tfrac{1}{2}\lam^2)_{1,0} \mu _{1,0}\right)
			\Big)
			\\ & \qquad  \left.
			+ \frac{(T-t)^3(\gam-1)^2}{8 \gam^4 (\tfrac{1}{2}\sig^2)_0}
			\Big(
			(\rho \sig \beta)_0 (\tfrac{1}{2}\lam^2)_{0,1} \left(-2 \left(\gamma  c_0-(\gam - 1 ) (\rho \beta \lam)_0\right) (\tfrac{1}{2}\lam^2)_{0,1}+\left(-2 \mu _0+\gamma  \sigma _0^2\right) (\tfrac{1}{2}\lam^2)_{1,0}\right)
			\Big)\right] . \\[-4em]
\end{align}

\subsection{Implied Sharpe ratio}
We now compute the first order approximation  of the implied Sharpe ratio $\Lam$, which was introduced in Section \ref{ISR}.  
From \eqref{ISRsumm}, we have $\Lam \approx \bar\Lam$, where
\begin{equation}
 \bar\Lam= \lam_0 + \lambda_{1,0}(x-\bar x) + \lambda_{0,1}(y-\bar y) +\frac12(T-t)
			\Big(\lam_{0,1} \(c_0+ \tfrac{1-\gam}{\gam} (\rho \beta \lam)_0\)
						+ \lam_{1,0} \( \tfrac{1}{\gam}\mu_0-(\tfrac{1}{2}\sig_0^2) \) \Big). \label{eq:Lam.CRRA}
\end{equation}
The second order correction 
$\Lam_2$ is quite long, and we omit it for the sake of brevity.

Observe that, for power utility, in which an explicit expression for the constant parameter Merton value function $M$ is available, one can obtain an expression for the implied Sharpe ratio $\Lam$ by solving \eqref{eq:implied.sharpe} with $M$ given by \eqref{Mertonftn}:
\begin{align}
\Lam
	&=	\sqrt{ \log \( \frac{V}{U(w)} \) \frac{2 \gam}{(1-\gam)(T-t)} } . \label{eq:sharpe.exact}
\end{align}
This will be useful when we test the numerical accuracy of the Sharpe ratio approximation in two examples.

\subsection{Accuracy of the approximation for stochastic volatility models} 
In this section, we establish the accuracy of 
$$\Vnb{n}=\sum_{k=0}^n\Vn{k},$$
the $n$th-order approximation of the value function $V$, assuming stochastic volatility dynamics of the form
\begin{align}
\dd X_t
	&=	\( \mu(Y_t) - \frac{1}{2}\sig^2(Y_t) \)  \dd t + \sig(Y_t)  \dd B_t^X , \\
\dd Y_t
	&=	c(Y_t) \dd t + \beta(Y_t) \dd B_t^Y , \label{eq:SV} \\
\dd \<B^X,B^Y\>_t
	&=	\rho \,\dd t ,
\end{align}
and a utility function $U$ of the power utility class \eqref{eq:CRRA}.

Throughout this section, we will make the following assumption:
\begin{assumption}
\label{assumption}
There exists a constant $C>0$ such that the following holds: \\
(i) \emph{Uniform ellipticity}: $1/C \leq \beta^2 \leq C$.\\
(ii) \emph{Regularity and boundedness}: The coefficients 
$c$, $\rho \beta \lam$, $\beta^2$ and $\lam^2$ are
$C^{n+1}(\Rb)$ and all derivatives up to order $n$ are bounded by $C$.\\
(iii) The \emph{risk aversion parameter} in the utility function \eqref{eq:CRRA} satisfies $\gam>1$.
\end{assumption}
Clearly, stochastic volatility dynamics \eqref{eq:SV} are a special case of the more general local-stochastic volatility dynamics \eqref{dXdY}.  As such, one can obtain a series approximation $\Vnb{n}$ of the value function $V^\eps(t,y,w)$, which is in this case independent of $x$, 
by solving the sequence of PDEs \eqref{Vneqn}.
An alternative but equivalent approach is to linearize the full PDE \eqref{eq:u.eps.pde},
and then perform a series approximation on the resulting linear PDE.  This is the approach we follow here.

Assuming power utility \eqref{eq:CRRA} and dynamics given by \eqref{eq:SV}, \cite{zariphopoulou2001} shows that the function 
$V^\eps(t,y,w)$, solution of \eqref{eq:u.eps.pde},
is given by
\begin{align}
V^\eps(t,w,y)
	&=	\frac{w^{1-\gam}}{1-\gam} \( \psi^\eps(t,y) \)^\eta , &
\eta
	&=	\frac{\gam}{\gam + (1-\gam)\rho^2} , \label{eq:eta}
\end{align}
where the function $\psi^\eps$ satisfies the Cauchy problem
\begin{align}
0
	&=	( \d_t + \Acb^\eps ) \psi^\eps , &
\psi^\eps(T,y)
	&=	1 , &
\eps
	&\in [0,1] , \label{eq:psi.eps.pde}
\end{align}
and $\Acb^\eps$ is a linear elliptic operator given by
\begin{align}
\Acb^\eps
	&=	(\tfrac{1}{2} \beta^2)^\eps \d_y^2 + \( c^\eps + \tfrac{1-\gam}{\gam} (\rho \beta \lam)^\eps  \) \d_y 
			+ \tfrac{1-\gam}{\eta \gam}(\tfrac{1}{2}\lam^2)^\eps . \label{eq:A.hat}
\end{align}
Let us denote $\Acb = \Acb^\eps |_{\eps=1}$ and $\psi = \psi^\eps |_{\eps=1}$.

\begin{remark}
Assumption \ref{assumption} part (iii) guarantees that the last term in \eqref{eq:A.hat} is strictly negative.
\end{remark}
\begin{remark}
The linearization transformation described above works only for one-factor pure stochastic volatility dynamics \eqref{eq:SV}, or complete market pure local volatility models, and only for power utility \eqref{eq:CRRA}.  For more general local-stochastic volatility dynamics \eqref{dXdY} and utility functions $U$, one must work with nonlinear PDE \eqref{eq:hjb.3}.
\end{remark}

We return now to \eqref{eq:psi.eps.pde}.  Noting that $\Acb^\eps$ can be written as
\begin{align}
\Acb^\eps
	&=	\sum_{n=0}^\infty \eps^n \Acb_n , &
\Acb_n
	&=	(\tfrac{1}{2} \beta^2)_n \d_y^2 + \( c_n + \frac{1-\gam}{\gam} (\rho \beta \lam)_n  \) \d_y 
			+ \frac{1-\gam}{\eta \gam} (\tfrac{1}{2}\lam^2)_n , \label{eq:Acb.eps.expand}
\end{align}
we seek a solution $\psi^\eps$ to \eqref{eq:psi.eps.pde} of the form
\begin{align}
\psi^\eps
	&=	\sum_{n=0}^\infty \eps^n \psi_n . \label{eq:psi.eps.expand}
\end{align}
Inserting \eqref{eq:Acb.eps.expand} and \eqref{eq:psi.eps.expand} into PDE \eqref{eq:psi.eps.pde} and collecting terms of like powers of $\eps$ we obtain the following sequence of nested PDEs:
\begin{align}
\Oc(1):&&
0
	&=	( \d_t + \Acb_0 ) \psi_0 , &
\psi_0(T,y)
	&=	1 ,  \label{eq:psi0.pde} \\
\Oc(\eps^n):&&
0
	&=	( \d_t + \Acb_0 ) \psi_n + \sum_{k=1}^n \Acb_k \psi_{n-k}, &
\psi_n(T,y)
	&=	0 .  \label{eq:psin.pde}
\end{align}
This sequence of nested PDEs has been solved explicitly in \cite{lorig-pagliarani-pascucci-2}.  We present the result here.
\begin{theorem}
\label{thm:psi}
Let $\psi_0$ and $\psi_n$ $(n \geq 1)$ satisfy \eqref{eq:psi0.pde} and \eqref{eq:psin.pde}, respectively.  Then, omitting $y$-dependence for simplicity, we have
\begin{align}
\psi_0(t)
	&=	\exp\( (T-t) \frac{1-\gam}{\eta \gam} (\tfrac{1}{2}\lam^2)_0 \) , &
\psi_n(t)
	&=	\Lch_n(t,T) \psi_0(t) , \label{eq:psi0}
\end{align}
where the linear operator $\Lch_n(t,T)$ is given by
\begin{align}
\Lch_n(t,T)
	&:=	\sum_{k=1}^n \int_t^T \dd t_1 \int_{t_1}^T \dd t_2 \cdots \int_{t_{k-1}}^T \dd t_k \sum_{I_{n,k}}
			\Gch_{i_1}(t,t_1) \Gch_{i_2}(t,t_2) \cdots \Gch_{i_k}(t,t_k) , 
\end{align}
with $I_{n,k}$ defined in \eqref{eq:Ikm} and
\begin{align}
\Gch_{i}(t,t_k)
	&=	\Acb_i(\Ych(t,t_k)) , &
\Ych(t,t_k)
	&=	y + (t_k-t) \( c_0 + \frac{1-\gam}{\gam} (\rho \beta \lam)_0  \) + 2 (\tfrac{1}{2} \beta^2)_0 \d_y . 
\end{align}
Here, the notation $\Acb_i(\Ych(t,t_k))$ indicates that $y$ is replaced by $\Ych(t,t_k)$ in the coefficients of $\Acb_i$.
\end{theorem}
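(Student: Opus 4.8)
The plan is to read the nested system \eqref{eq:psi0.pde}--\eqref{eq:psin.pde} as the order-by-order expansion in $\eps$ of the propagator of $\Acb^\eps = \Acb_0 + \sum_{n\ge1}\eps^n\Acb_n$ about the \emph{constant-coefficient} operator $\Acb_0$, and then to collapse the resulting time-ordered operator products by means of a single conjugation identity that converts a shift in time into a shift of the spatial variable $y$. Throughout, let
\begin{equation}
P_0(t,s) := \ee^{(s-t)\Acb_0}
\end{equation}
denote the (backward) heat semigroup generated by $\Acb_0$; by Assumption \ref{assumption}(i) the leading coefficient $(\half\beta^2)_0$ is strictly positive, so $P_0$ is a genuine smoothing Gaussian semigroup and satisfies $P_0(t,s)P_0(s,u)=P_0(t,u)$.

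I would first dispatch $\psi_0$: since the coefficients of $\Acb_0$ in \eqref{eq:Acb.eps.expand} are constant and the terminal datum $\psi_0(T,\cdot)=1$ is independent of $y$, the transport and diffusion terms annihilate $\psi_0$, and \eqref{eq:psi0.pde} reduces to the scalar ODE $\d_t\psi_0 + \tfrac{1-\gam}{\eta\gam}(\half\lam^2)_0\,\psi_0=0$ with $\psi_0(T)=1$, giving the claimed $\psi_0(t)=\exp\!\big((T-t)\tfrac{1-\gam}{\eta\gam}(\half\lam^2)_0\big)=P_0(t,T)\,1$. For $n\ge1$, \eqref{eq:psin.pde} is a linear inhomogeneous backward equation with zero terminal data, so Duhamel's principle gives
\begin{equation}
\psi_n(t) = \int_t^T P_0(t,s)\sum_{k=1}^n \Acb_k\,\psi_{n-k}(s)\,\dd s .
\end{equation}
Iterating this recursion (equivalently, expanding the perturbed propagator as a Dyson series and collecting the coefficient of $\eps^n$, which forces the factor indices $i_1,\dots,i_k\ge1$ to satisfy $i_1+\cdots+i_k=n$, i.e.\ to range over $I_{n,k}$ of \eqref{eq:Ikm}) yields
\begin{equation}
\psi_n(t) = \sum_{k=1}^n \int_t^T \dd t_1 \cdots \int_{t_{k-1}}^T \dd t_k \sum_{I_{n,k}} P_0(t,t_1)\Acb_{i_1}P_0(t_1,t_2)\Acb_{i_2}\cdots P_0(t_{k-1},t_k)\Acb_{i_k}P_0(t_k,T)\,1 .
\end{equation}

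The heart of the argument is the commutation identity that moves each propagator rightward past the adjacent $\Acb_{i_j}$. Writing $\alpha_0:=(\half\beta^2)_0$ and $\vartheta_0:=c_0+\tfrac{1-\gam}{\gam}(\rho\beta\lam)_0$, a direct computation gives $[\Acb_0,\,y\,]=2\alpha_0\d_y+\vartheta_0$, while $\d_y$ commutes with $\Acb_0$; integrating the Heisenberg relation $\tfrac{\dd}{\dd\tau}\big(\ee^{\tau\Acb_0}\,y\,\ee^{-\tau\Acb_0}\big)=\ee^{\tau\Acb_0}[\Acb_0,y]\ee^{-\tau\Acb_0}=2\alpha_0\d_y+\vartheta_0$ then yields
\begin{equation}
\ee^{\tau\Acb_0}\,y\,\ee^{-\tau\Acb_0} = y + \tau\vartheta_0 + 2\alpha_0\tau\,\d_y = \Ych(t,s), \qquad \tau=s-t .
\end{equation}
Since $\d_y$ passes through the conjugation unchanged and the coefficients of $\Acb_k$ are polynomials in $y$, this extends to $P_0(t,s)\Acb_k = \Acb_k(\Ych(t,s))\,P_0(t,s) = \Gch_k(t,s)\,P_0(t,s)$. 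Applying this to the leftmost propagator, using $P_0(t,t_1)P_0(t_1,t_2)=P_0(t,t_2)$, and repeating, all propagators collapse onto $P_0(t,T)\,1=\psi_0(t)$, leaving $\Gch_{i_1}(t,t_1)\cdots\Gch_{i_k}(t,t_k)\psi_0(t)$; inserting this into the displayed sum produces exactly $\psi_n(t)=\Lch_n(t,T)\psi_0(t)$.

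I expect the main obstacle to be the rigorous justification of these operator manipulations rather than the algebra. Because we work order by order in $\eps$, each $\psi_n$ is a \emph{finite} sum and every $\Gch_i(t,s)$ applied to the exponential $\psi_0$ produces a polynomial times $\psi_0$, so there is no convergence issue within a fixed order; the delicate points are verifying that the conjugation identity holds as an operator identity on the class of polynomial multiples of $\psi_0$, and checking that the time shift in $\Ych$ is consistently measured from the base time $t$ after each collapse, which is precisely what the semigroup property guarantees. Ellipticity from Assumption \ref{assumption}(i) ensures $P_0$ is a well-defined smoothing semigroup on this class, legitimizing the interchange of $P_0$ with the operators $\Acb_k$ and with the time integrals.
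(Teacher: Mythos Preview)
Your proof is correct and complete at the formal level. The paper itself does not supply a proof of this theorem; it merely refers to \cite[Theorem~7]{lorig-pagliarani-pascucci-2}. What you have written is in fact the argument underlying that cited result: Duhamel iteration produces the Dyson series, and the conjugation identity $\ee^{\tau\Acb_0}y\,\ee^{-\tau\Acb_0}=y+\tau\vartheta_0+2\alpha_0\tau\,\d_y$ (derived from $[\Acb_0,y]=2\alpha_0\d_y+\vartheta_0$ together with $[\Acb_0,\d_y]=0$) lets you push every $P_0(t,t_j)$ rightward, so that the semigroup factors telescope to $P_0(t,T)\,1=\psi_0(t)$ and the operators $\Gch_{i_j}(t,t_j)$ remain on the left in the correct order. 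Your remark that the time shift in $\Ych$ must always be measured from the base point $t$ after each collapse is exactly the point that makes the telescoping work, and the semigroup property is precisely what enforces it.

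One small observation: your conjugation identity yields $y+\tau\vartheta_0+2\alpha_0\tau\,\d_y$, whereas the displayed formula for $\Ych(t,t_k)$ in the theorem statement omits the factor $(t_k-t)$ in front of $2(\tfrac12\beta^2)_0\d_y$. Your derivation is the correct one; the missing time factor in the statement is a typographical slip, as is clear both from the commutator computation and from comparison with the analogous operators $\Mc_X(s)$, $\Mc_Y(s)$ in \eqref{Mopdef}. Your concerns about rigor are well placed but easily handled here: within a fixed order $n$ everything acts on the finite-dimensional invariant family of polynomial multiples of the Gaussian-exponential $\psi_0$, so all operator identities are honest equalities of smooth functions and no convergence questions arise.
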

\begin{proof}
See \cite[Theorem 7]{lorig-pagliarani-pascucci-2}.
\end{proof}
\noindent
Having obtained an explicit expression for $\psi_n$ $(n \geq 0)$ we now define $\psib_n$, the $n$th-order approximation of $\psi$:
\begin{align}
\psib_n
	&:= \sum_{k=0}^n \psi_k ,\quad\mbox{with}\quad \bar y =y. \label{eq:psib.n}
\end{align}
The accuracy of the series approximation $\psib_n$ 
is established in \cite{lorig-pagliarani-pascucci-4}.
\begin{theorem}
Let $\psi$ be the solution of
\eqref{eq:psi.eps.expand} with $\eps=1$
and let $\psib_n$ be defined by \eqref{eq:psib.n} with $\psi_i$ $(i \geq 0)$ as given in Theorem \ref{thm:psi}.  Then, under Assumption \ref{assumption}, we have
\begin{align}
\sup_y |\psi(t,y) - \psib_n(t,y)|
	&=	\Oc ( \tau^{\frac{n+3}{2}} ) , &
\tau
	&:=	T-t . \label{eq:psi.accuracy}
\end{align}
\end{theorem}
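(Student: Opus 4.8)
The plan is to read \eqref{eq:psi.accuracy} as a residual estimate for the linear Cauchy problem \eqref{eq:psi.eps.pde} at $\eps=1$, comparing $\psi$ with the truncation $\psib_n$ through the residual the latter leaves in the exact equation. First I would set $E_n:=\psi-\psib_n$ and identify the equation it solves. Using $(\d_t+\Acb_0)\psi_0=0$ from \eqref{eq:psi0.pde}, the nested relations \eqref{eq:psin.pde}, and a reindexing of the double sum, one finds
\begin{align}
(\d_t+\Acb)\psib_n = R_n, \qquad R_n = \sum_{k=1}^{n}\sum_{m=n-k+1}^{n}\Acb_k\psi_m \;+\; \Big(\Acb-\sum_{k=0}^{n}\Acb_k\Big)\psib_n .
\end{align}
The first group collects the ``cross terms'' with $k+m\ge n+1$; the second is the \emph{Taylor-remainder} term, because $\Acb$ differs from its truncation $\sum_{k\le n}\Acb_k$ by an operator whose coefficients are the order-$(n{+}1)$ Taylor remainders of $c,\rho\beta\lam,\beta^2,\lam^2$ about $\bar y$, hence $\Oc(|y-\bar y|^{n+1})$ with constants controlled by the bounded derivatives in Assumption \ref{assumption}(ii). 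Since $\psib_n(T,\cdot)=1=\psi(T,\cdot)$, the error satisfies $(\d_t+\Acb)E_n=-R_n$ with $E_n(T,\cdot)=0$.

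Next I would represent $E_n$ by Duhamel's principle against the fundamental solution $p(t,y;s,\xi)$ of $\d_t+\Acb$,
\begin{align}
E_n(t,y)=\int_t^T\!\!\int_{\Rb} p(t,y;s,\xi)\,R_n(s,\xi)\,\dd\xi\,\dd s .
\end{align}
Assumption \ref{assumption}(i)--(ii) places us in the uniformly parabolic regime with bounded coefficients, so the classical Aronson-type bound $0\le p(t,y;s,\xi)\le C(s-t)^{-1/2}\exp(-c(y-\xi)^2/(s-t))$ holds, while Assumption \ref{assumption}(iii) makes the zeroth-order coefficient $\tfrac{1-\gam}{\eta\gam}(\tfrac12\lam^2)$ strictly negative and hence keeps the propagator bounded uniformly on $[0,T]$. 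The quantitative fact I would extract is the weighted moment estimate $\int_{\Rb}p(t,\bar y;s,\xi)\,|\xi-\bar y|^{B}\,\dd\xi\le C\,(s-t)^{B/2}$, valid once the kernel is centred at the expansion point $\bar y$.

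The technical heart is a $\tau$-counting lemma, proved by induction on $m$ using \eqref{eq:psin.pde}: for every $m\ge1$, the ratio $\psi_m/\psi_0$ is a polynomial in $(y-\bar y)$ and $\tau=T-t$ each of whose monomials $(y-\bar y)^{b}\tau^{a}$ obeys $2a+b\ge m+2$ (and $a\ge1$). The mechanism is that $\psi_0$ is constant in $y$ by \eqref{eq:psi0}, so the first two $y$-derivatives inside each $\Acb_k$ annihilate it; every coefficient of $\Acb_k$ carries a factor $(y-\bar y)^k$ from \eqref{eq:Acb.eps.expand}; multiplication by $(y-\bar y)$ raises the weight $2a+b$ by one while each $\d_y$ lowers it by one; and, decisively, the Duhamel integral $\int_t^T\dd s$ in \eqref{eq:psin.pde} raises the weight by two. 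Carrying the three parts of each $\Acb_k$ (potential, $\d_y$, and $\d_y^2$) through one such step preserves the inequality $2a+b\ge m+2$, the critical case being the $\d_y^2$ part applied to a $\psi_{m-k}$ with $m-k\ge1$.

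Finally I would combine these ingredients under the diagonal choice $\bar y=y$, which simultaneously centres the kernel at $\bar y$ and kills every $(y-\bar y)^{b}$ with $b\ge1$. For the Taylor-remainder term the worst contribution is the potential remainder acting on $\psi_0$, a source of size $\Oc(|\xi-\bar y|^{n+1})$; the weighted moment bound gives $\int p\,|\xi-\bar y|^{n+1}\,\dd\xi=\Oc((s-t)^{(n+1)/2})$ and the outer $\dd s$-integral supplies one further power of $\tau$, for a net $\Oc(\tau^{(n+3)/2})$. For a cross term $\Acb_k\psi_m$ with $k+m\ge n+1$, the counting lemma forces the exponent produced after the $\xi$- and $s$-integrals to be at least $\tfrac12(k+m+2)\ge\tfrac12(n+3)$; summing the finitely many terms and taking $\sup_y$ (uniform, since all constants in Assumption \ref{assumption} are global) yields \eqref{eq:psi.accuracy}. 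The main obstacle I anticipate lies precisely here: a crude bound makes the cross terms appear to be $\Oc(\tau^{(n+1)/2})$, one half-power short of the claim, and only the counting lemma, which upgrades the weight of $\psi_m$ from the naive $m$ to $m+2$, together with the diagonal substitution, recovers the sharp exponent; establishing the Gaussian kernel bounds and the weighted-moment estimate against polynomially growing integrands with uniform constants is the other place where Assumption \ref{assumption} must be used with care.
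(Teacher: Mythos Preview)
The paper does not prove this theorem; its entire proof is the citation ``See \cite[Theorem 3.10]{lorig-pagliarani-pascucci-4}.'' Your proposal, by contrast, sketches a self-contained argument, and it is essentially the one carried out in that reference. The decomposition of the residual $R_n$ into cross terms plus the Taylor-remainder operator is correct (your reindexing checks out), the Duhamel representation against the fundamental solution of $\d_t+\Acb$ is the right tool, and the Gaussian upper bounds and weighted moment estimate are available under Assumption~\ref{assumption} exactly as you describe.

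The crucial step is your counting lemma, namely that every monomial $(y-\yb)^b\tau^a$ in $\psi_m/\psi_0$ satisfies $2a+b\ge m+2$ (not merely $\ge m$). This is precisely the mechanism that rescues the extra half-power: applying $\Acb_k$ to such a monomial costs at most two units of weight (from $\d_y^2$) but gains $k$ from the coefficient, so $\Acb_k\psi_m$ has weight $\ge m+k$; the Gaussian moment integral converts $(\xi-\yb)^B$ into $(s-t)^{B/2}$; and the outer $\dd s$-integral adds one power of $\tau$, yielding $\tau^{(m+k+2)/2}\ge\tau^{(n+3)/2}$. Your diagnosis that a naive count gives only $\tau^{(n+1)/2}$ and that the counting lemma is what closes the gap is exactly right, and the inductive proof you outline (base case $m=1$ giving $\tau(y-\yb)$ and $\tau^2$, inductive step tracking the three pieces of each $\Acb_k$ through one Duhamel integration) is the standard route. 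One minor point: Assumption~\ref{assumption}(ii) bounds derivatives only up to order $n$, so the global bound on the Taylor remainder by $C|y-\yb|^{n+1}$ requires the $(n{+}1)$st derivative to be bounded as well; the cited paper in fact assumes this, and it is a harmless strengthening here.
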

\begin{proof}
See \cite[Theorem 3.10]{lorig-pagliarani-pascucci-4}.
\end{proof}
\noindent
Our task is now to translate the approximation $\psib_n$ and accuracy result for $|\psi-\psib_n|$ into an approximation $\Vnb{n}$ and accuracy result for $|V-\Vnb{n}|$.  
Expanding $V^\eps$, given by \eqref{eq:eta}, in powers of $\eps$, we obtain
\begin{align}
V^\eps
	&=	\frac{w^{1-\gam}}{1-\gam} \( \psi^\eps \)^\eta \\
	&=	\frac{w^{1-\gam}}{1-\gam}  \psi_0^\eta 
			+ \frac{w^{1-\gam}}{1-\gam} \sum_{k=1}^\infty \eps^k \( \sum_{m=1}^k \frac{1}{m!} \( \d_\psi^m \psi_0^\eta \) 
			\( \sum_{i \in I_{k,m}} \prod_{j=1}^m \psi_{i_j} \) \) 
	=: \sum_{k=0}^\infty \eps^k V_k ,
\end{align}
where $I_{k,m}$ is defined in \eqref{eq:Ikm}, and 
\begin{align}
\Vnb{n}
	&=	\sum_{k=0}^n \Vn{k} , &
\Vn{0}
	&=	\frac{w^{1-\gam}}{1-\gam}  \psi_0^\eta , &
\Vn{k}
	&=	\frac{w^{1-\gam}}{1-\gam} \sum_{m=1}^k \frac{1}{m!} \( \d_\psi^m \psi_0^\eta \) 
			\( \sum_{i \in I_{k,m}} \prod_{j=1}^m \psi_{i_j} \) , \label{eq:u0.un} 
		\end{align}
		and $\bar y$ is set by $\bar y =y$.
The following theorem establishes the accuracy of $\Vnb{n}$, the $n$th-order approximation of the value function $V$.
\begin{theorem}
\label{thm:u}
Let $(X,Y)$ have stochastic volatility dynamics \eqref{eq:SV} and assume the utility function $U$ is of the  power utility class \eqref{eq:CRRA}.  Then, under Assumption \ref{assumption}, for a fixed $w$, the approximate value function $\Vnb{n}$, given by \eqref{eq:u0.un}, satisfies
\begin{align}
\sup_y |V(t,y,w) - \Vnb{n}(t,y,w) |
	&=	\Oc(\tau^{\frac{n+3}{2}}) , & 
\tau
	&:=	T-t , \label{eq:u.accuracy}
\end{align}
where $\eta$ is defined in \eqref{eq:eta}.
\end{theorem}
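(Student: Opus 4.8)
The plan is to leverage the explicit representation $V=\tfrac{w^{1-\gam}}{1-\gam}\,\psi^\eta$ from \eqref{eq:eta}, which reduces the entire estimate to the scalar power map $t\mapsto t^\eta$ applied to the already-controlled quantity $\psi$. For fixed $w$ and $\gam>1$ the prefactor $\tfrac{w^{1-\gam}}{1-\gam}$ is a fixed nonzero constant, so it suffices to prove $\sup_y|\psi^\eta-\bar u_n|=\Oc(\tau^{(n+3)/2})$, where $\bar u_n:=\tfrac{1-\gam}{w^{1-\gam}}\Vnb{n}$. By the construction \eqref{eq:u0.un}, $\bar u_n$ is precisely the $n$th-order Taylor polynomial in $\eps$ of $(\psi^\eps)^\eta$ evaluated at $\eps=1$; since the $\eps$-coefficients of order $\le n$ depend only on $\psi_0,\dots,\psi_n$, this is the same whether one expands the full $\psi^\eps$ or its truncation. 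I would then split
$$\psi^\eta-\bar u_n=\bigl(\psi^\eta-\psib_n^{\,\eta}\bigr)+\bigl(\psib_n^{\,\eta}-\bar u_n\bigr),$$
controlling the first difference with the accuracy estimate \eqref{eq:psi.accuracy} and the second by a direct order count.

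For the first difference I would first record a uniform two-sided bound $0<1/C'\le \psi,\psib_n\le C'$, valid in $y$ for $\tau$ in a neighborhood of $0$. This is where Assumption \ref{assumption} and $\gam>1$ enter: the zeroth-order potential $\tfrac{1-\gam}{\eta\gam}(\tfrac12\lam^2)$ is bounded and, by the Remark, negative, so $\psi_0=\exp\!\bigl((T-t)\tfrac{1-\gam}{\eta\gam}(\tfrac12\lam^2)_0\bigr)\in(0,1]$, and a Feynman--Kac representation (or continuity in $\tau$ together with $\psi(T,\cdot)=1$) shows $\psi\to1$ uniformly as $\tau\to0$ and stays bounded away from $0$; since $\psib_n=\psi_0+\Oc(\tau^{3/2})$, the same holds for $\psib_n$. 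On this region $t\mapsto t^\eta$ is Lipschitz, so the mean value theorem gives $|\psi^\eta-\psib_n^{\,\eta}|\le C''\,|\psi-\psib_n|$, and \eqref{eq:psi.accuracy} bounds the right-hand side by $\Oc(\tau^{(n+3)/2})$.

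For the second difference, note that $\psib_n^{\,\eta}=\bigl(\sum_{k=0}^n\psi_k\bigr)^\eta$ expands (around $\psi_0$) into bounded multiples $\tfrac1{m!}\bigl(\d_\psi^m\psi_0^\eta\bigr)$ of products $\psi_{i_1}\cdots\psi_{i_m}$ with each $i_j\in\{1,\dots,n\}$, the derivatives being bounded by the region estimate. The terms surviving in $\psib_n^{\,\eta}-\bar u_n$ are exactly those of total degree $\sum_j i_j\ge n+1$; since each $i_j\le n<n+1$, this forces $m\ge2$. Applying \eqref{eq:psi.accuracy} at consecutive orders yields the per-term estimate $\psi_k=\psib_k-\psib_{k-1}=\Oc(\tau^{(k+2)/2})$ for $k\ge1$, so each such product is $\Oc\!\bigl(\tau^{(\ell+2m)/2}\bigr)$ with $\ell=\sum_j i_j\ge n+1$ and $m\ge2$; the smallest exponent is $(n+5)/2$, attained at $\ell=n+1$, $m=2$. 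Summing the (for small $\tau$, geometrically dominated) tail preserves this order, so $\psib_n^{\,\eta}-\bar u_n=\Oc(\tau^{(n+5)/2})$, which is subdominant. Combining the two pieces and restoring the constant gives \eqref{eq:u.accuracy}.

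The main obstacle I anticipate is analytic rather than combinatorial: establishing the uniform positivity of $\psi$ (and $\psib_n$) so that the power map $t\mapsto t^\eta$ has bounded first and higher derivatives on the relevant region—without this the split is meaningless. Once that bound is in hand, the argument is essentially bookkeeping, and the only conceptual point to get right is that $\bar u_n$ is genuinely the $\eps$-Taylor polynomial of $(\psi^\eps)^\eta$, so that the first piece reduces \emph{exactly} to $|\psi-\psib_n|$; the observation that the ``power-of-truncation'' $\psib_n^{\,\eta}$ differs from $\bar u_n$ only at order $\tau^{(n+5)/2}$ confirms that the entire error at order $\tau^{(n+3)/2}$ is inherited from the linear problem for $\psi$.
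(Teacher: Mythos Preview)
Your proposal is correct and actually more carefully structured than the paper's own argument. The paper's proof is very brief: it observes from Theorem~\ref{thm:psi} that $\psi_0=\Oc(1)$, uses \eqref{eq:psi.accuracy} at consecutive orders to get $\sup_y\psi_k=\Oc(\tau^{(k+2)/2})$ for $k\ge1$, notes from \eqref{eq:u0.un} that consequently $\sup_y\Vn{k}=\Oc(\tau^{(k+2)/2})$, and then simply asserts $\sup_y|V-\Vnb{n}|=\Oc(\tau^{(n+3)/2})$ as the order of the tail. This last step implicitly treats $V$ as the convergent sum $\sum_k\Vn{k}$ and bounds the remainder termwise.

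Your decomposition $\psi^\eta-\bar u_n=(\psi^\eta-\psib_n^{\,\eta})+(\psib_n^{\,\eta}-\bar u_n)$ avoids that implicit step: the first piece is controlled directly by the Lipschitz property of $t\mapsto t^\eta$ on a region where $\psi,\psib_n$ are uniformly bounded away from zero (this is where you correctly identify the role of Assumption~\ref{assumption}(iii)), reducing immediately to \eqref{eq:psi.accuracy}; the second piece is a finite-in-spirit combinatorial tail which you show is even smaller, $\Oc(\tau^{(n+5)/2})$. Both proofs rest on the same per-term estimate $\psi_k=\Oc(\tau^{(k+2)/2})$, but yours makes explicit why the nonlinear map $\psi\mapsto\psi^\eta$ does not degrade the rate, at the price of having to secure the uniform positivity bound you flag as the main analytic obstacle.
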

\begin{proof}
Theorem \ref{thm:psi} implies that $\psi_0(t) = \Oc(1)$ as $\tau \to 0$ and equation \eqref{eq:psi.accuracy} implies
\begin{align}
\sup_y \psi_n(t,y)
	&=	\Oc( \tau^{\frac{n+2}{2}} ) , &
n
	&\geq 1 . \label{eq:orders}
\end{align}
It therefore follows from \eqref{eq:u0.un} that
$V_n$ 
satisfies
\begin{align}
\sup_y V_n(t,y,w)
	&=	\Oc( \tau^{\frac{n+2}{2}} ) , &
n
	&\geq 1 .
\end{align}
Therefore, we have
\begin{align}
\sup_y |V(t,y,w) - \Vnb{n}(t,y,w) |
	&=	\Oc(\tau^{\frac{n+3}{2}}) ,
\end{align}
as claimed.
\end{proof}

\section{Examples}
\label{eq:examples2}
In this section we provide two numerical examples, which illustrate the accuracy and versatility of the series approximations developed in this paper. Both are based on power utility, but the first order approximations described in Section \ref{summ} could be computed for utility functions outside of this class, for instance mixture of power utilities, introduced in \cite{fouque-sircar-zariphopoulou-2012}, which allow for wealth-varying relative risk aversion. There the solution of the constant parameter Merton problem $M$ is computed numerically, and LSV corrections in the formulas of Section \ref{summ} can be obtained by numerical differentiation. 

\subsection{Stochastic volatility example}
In our first example, we consider a stochastic volatility model in which the coefficients $(\mu,\sig,c,\beta)$ appearing in \eqref{dXdY} are given by
\begin{align}
\mu(y)
	&=	\mu , &
\sig(y)
	&=	\frac{1}{\sqrt{y}} , &
c(y)
	&=	\kappa ( \theta - y ) , &
\beta(y)
	&=	\del \sqrt{y} . \label{eq:stoch.vol}
\end{align}
Here, the constants $(\kappa, \theta, \del)$ must satisfy the usual Feller condition: 
$2 \kappa \theta \geq \del^2$. 

Assuming power utility \eqref{eq:CRRA}, 
an explicit formula for the value function of the infinite horizon consumption problem is obtained in \cite{chacko-viceira-2005}.  For the terminal utility optimization problem that we consider in this paper, an explicit formula for the value function $V$ 
in \eqref{eq:v.def}
is obtained in \cite[Section 6.4]{fouque-sircar-zariphopoulou-2012}:
\begin{align}
V(t,y,w)
	&=	\( \frac{w^{1-\gam}}{1-\gam} \) \ee^{\eta A(T-t) y + \eta B(T-t)} , &
\eta
	&=	\frac{\gamma }{\gamma +(1-\gamma )\rho^2} , \label{eq:v.heston} \\
A(t)
	&= a_+ \frac{1 - \ee^{-\alpha t}}{1-(a_-/a_+)\ee^{-\alpha t}}, &
B(t)
	&=	\kappa \theta \( a_- t - \frac{2}{\del^2} 
			\log \( \frac{1-(a_-/a_+)\ee^{-\alpha t}}{1-(a_-/a_+)}\) \), 
\end{align}
where
\begin{align}
a_\pm
	&=	\frac{-q \pm \sqrt{q^2-4 pr}}{2p} , &
\alpha 
	&=	\sqrt{q^2- 4 p r} , \\
p
	&=	\frac{1}{2} \del^2 , &
q
	&= 	\del \(\frac{1-\gamma }{\gamma }\) \mu \rho - \kappa, &
r
	&=	\frac{1}{2}\(\frac{1-\gamma }{\eta \gamma }\)\mu^2.
\end{align}
An explicit formula for the optimal investment strategy $\pi^*$ can be obtained by inserting \eqref{eq:v.heston} into \eqref{eq:pi}.
Likewise, an explicit formula for the implied Sharpe ratio $\Lam$ can be obtained by inserting \eqref{eq:v.heston} into \eqref{eq:sharpe.exact}.

The zeroth, first and second order approximations for $V$, $\pi^*$ and $\Lam$ can be obtained using the results of Section \ref{sec:examples}.
Fixing $\yb=y$ and using \eqref{Mertonftn}, \eqref{eq:u1.CRRA}, \eqref{pistarapproxpower} and \eqref{eq:Lam.CRRA}, we obtain
\begin{align}
\Vn{0}
	&=	U(w) \exp\( \frac{1-\gam}{\gam} \frac{\mu^2 y}{2} (T-t) \), &
\Vn{1}
	&=	\frac{1-\gam}{4\gam^2} \mu^2  (T-t)^2 
			\Big(\gamma  \kappa (\theta - y) +(1-\gam) \rho \del \mu y \Big) \Vn{0} , \\
\pi^*
	&\approx	\frac{w}{\gam} \mu y +	\frac{w}{\gam} \rho \del \mu^2 \frac{1-\gam}{\gam} (T-t) y  , \\
\Lam_0
	&=	\mu \sqrt{y}, &
\Lam_1
	&=	\frac{(T-t)}{4 \gamma \sqrt{y}}
			\mu \Big(\gamma  \kappa (\theta - y) +(1-\gam) \rho \del \mu y \Big),
\end{align}
where in the strategy we have also expanded the coefficients in Taylor series. 
The second-order terms are omitted for the sake of brevity.

In Figure \ref{fig:heston} we plot as a function of $\sig=1/\sqrt{y}$ the exact value function $V$, the exact optimal investment strategy $\pi^*$ and the exact implied Sharpe ratio $\Lam$.  We also plot the zeroth, first and second-order approximations of these quantities.  For all three quantities, we observe a close match between the exact function ($u$, $\pi^*$ and $\Lam$) and the second order approximation.  Figure \ref{fig:heston} also includes a plot of the implied Sharpe ratio (both exact $\Lam$ and the second order approximation) as a function of the risk-aversion parameter $\gam$ for three different time horizons.  It is clear from the figure that the approximation is most accurate at the shortest time horizons, consistent with the accuracy result of Theorem \ref{thm:u}.
\begin{figure}[htb]
\centering
\begin{tabular}{cc}
\includegraphics[width=0.475\textwidth]{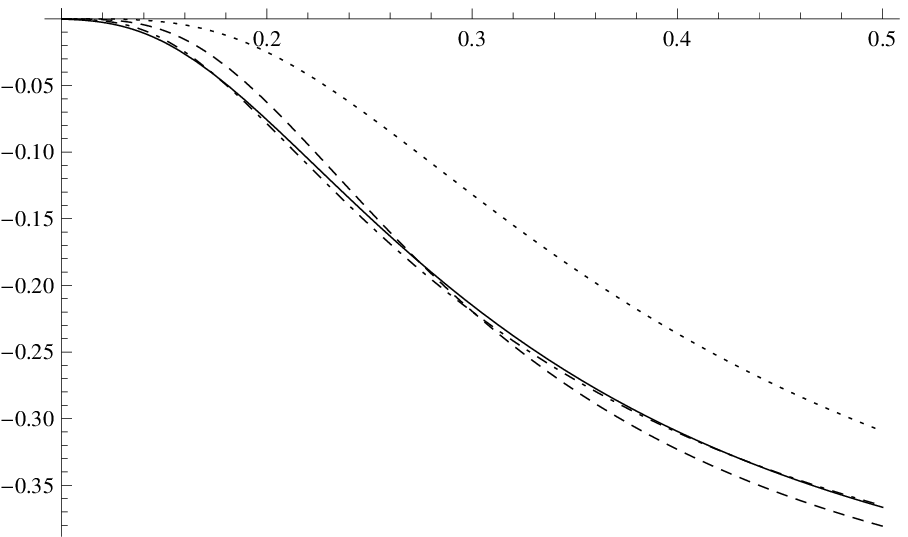}&
\includegraphics[width=0.475\textwidth]{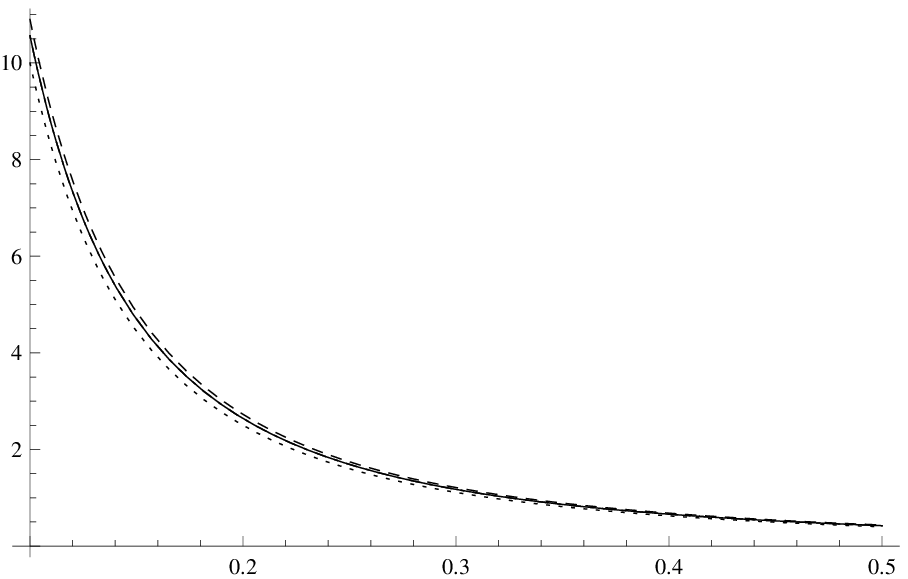}\\
$V$ vs $\sigma$ & $\pi^*$  vs $\sigma$ \\
\includegraphics[width=0.475\textwidth]{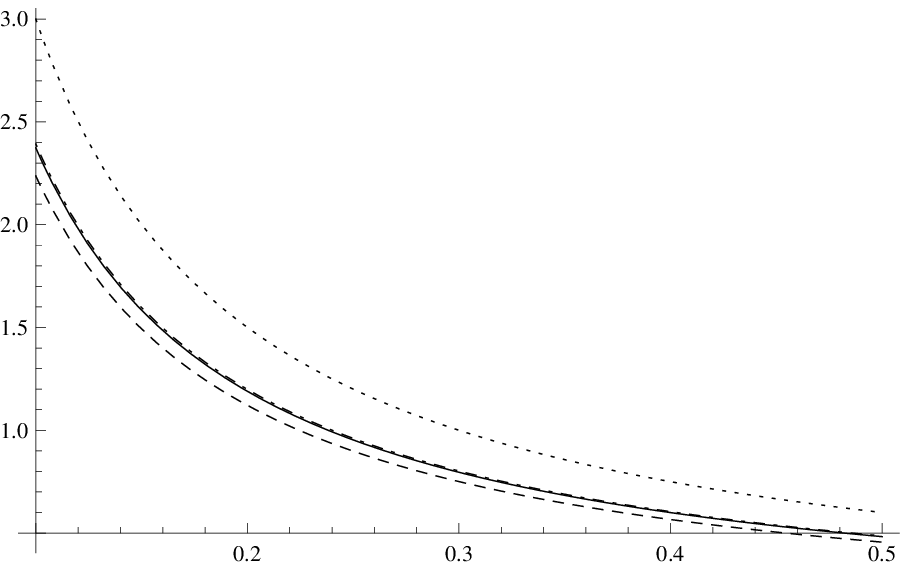}&
\includegraphics[width=0.475\textwidth]{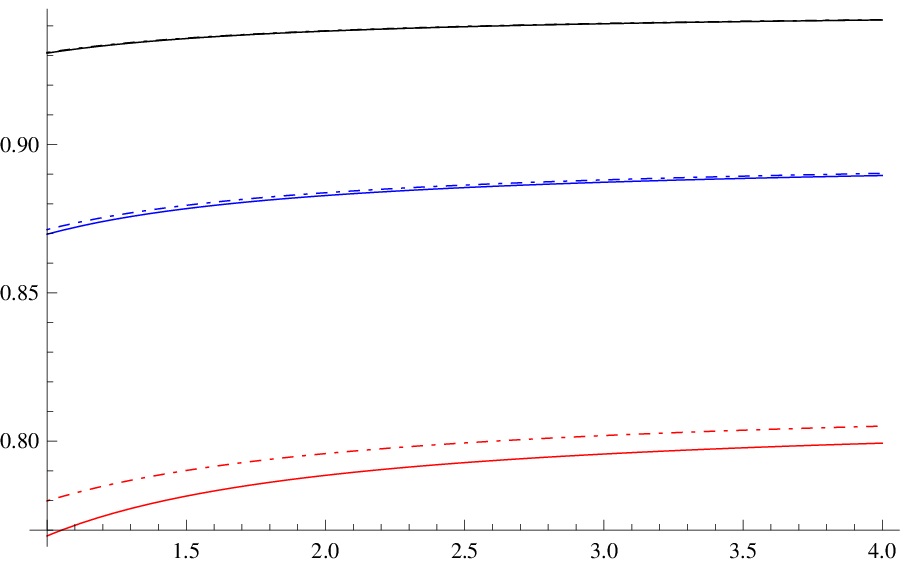}\\
$\Lam$  vs $\sigma$ & $\Lam$  vs $\gamma$
\end{tabular}
\caption{\small{{\em The value function (top left), optimal strategy  (top right) and implied Sharpe ratio (bottom left) are plotted as a function of instantaneous level of volatility $\sig=1/\sqrt{y}$ assuming power utility \eqref{eq:CRRA} and dynamics given by stochastic volatility model \eqref{eq:stoch.vol}.  
In all three plots, the solid line corresponds to the exact function and the dotted, dashed and dot-dashed lines correspond to the zeroth, first and second-order approximations, respectively.
The parameters used in these three plots are: $T-t=4.0$, $w=1.0$, $\kappa = 0.3$, $\theta = 0.2$, $\del = 0.3$, $\rho=-0.75$, $\mu = 0.3$ and $\gam = 3.0$.  On the bottom right we fix the volatility $\sig=0.3$ and we plot the implied Sharpe ration $\Lam$ as a function of $\gam$ for three different time horizons $T-t=\{1,2,4\}$ corresponding to black, blue and red, respectively.  The solid lines are exact.  The dot-dashed lines correspond to the second-order approximation.  The parameters used in the bottom right plot are $w=1.0$, $\kappa = 0.3$, $\theta = 0.2$, $\del = 0.3$, $\rho=-0.75$ and $\mu = 0.3$.}}
}
\label{fig:heston}
\end{figure}

\subsection{Local volatility example}
We now consider a local volatility model in which the coefficients $(\mu,\sig)$ appearing in \eqref{dXdY} are given by
\begin{align}
\mu(x)
	&=	\mu , &
\sig(x)
	&=	\del \ee^{\eta x} . 
	\label{eq:local.vol}
\end{align}
Since $Y$ plays no role in the dynamics of $X$, the coefficients $c$ and $\beta$ do not appear.

Assuming power utility \eqref{eq:CRRA}, an explicit formula for the value function $V$ in this setting is obtained in \cite{dries2005}
\begin{align}
V(t,x,w)
	&=	\frac{w^{1-\gam}}{1-\gam} \(f(t,\ee^{-2 \eta x})\)^\gam , &
f(t,d)
	&=	A(t) \ee^{B(t) d} , \label{eq:v.cev} \\
A(t)
	&=	\ee^{\lam_+ \eta(2\eta+1)(T-t)}
		\( \frac{\lam_- - \lam_+}{\lam_- - \lam_+ \ee^{2\eta^2(\lam_+-\lam_-)(T-t)} } \)^{\tfrac{2 \eta + 1}{2 \eta}} , &
B(t)
	&=	\frac{1}{\del^2} I(t) , \\
I(t)
	&=	\frac{\lam_+ \( 1 - \ee^{2 \eta^2(\lam_+-\lam_-)(T-t)} \)}
			{1-(\lam_+/\lam_-)\ee^{2 \eta^2(\lam_+-\lam_-)(T-t)}} , &
\lam_\pm
	&=	\frac{\mu \pm \sqrt{\gam \mu^2}}{2 \eta \gam} .
\end{align}
The optimal investment strategy $\pi^*$ can be obtained by inserting \eqref{eq:v.cev} into \eqref{eq:pi}
An explicit expression for the implied Sharpe ratio $\Lam$ can be obtained by inserting \eqref{eq:v.cev} into \eqref{eq:sharpe.exact}.

The zeroth, first and second order approximations for $V$, $\pi^*$ and $\Lam$ can be obtained using the results of Section \ref{sec:examples}.
Fixing $\xb=x$ and using \eqref{Mertonftn}, \eqref{eq:u1.CRRA}, \eqref{pistarapproxpower} and \eqref{eq:Lam.CRRA}, we obtain
\begin{align}
\Vn{0}
	&=	U(w)\exp\(\frac{1-\gam}{2 \gam} \frac{\mu^2}{\del^2}\ee^{- 2 \eta x}(T-t)\) , &
\Vn{1}
	&=	\frac{-(1-\gam)(T-t)^2\eta \mu^2}{2\gam^2\del^2} 
			\ee^{-2 \eta x} \Big( \mu -\gamma  \tfrac{1}{2} \del^2 \ee^{2 \eta x} \Big) u_0 , \\
\pi^*&\approx	w \frac{\mu}{\gam \del^2} \ee^{-2 \eta x}	- w \frac{(1-\gam)(T-t)}{\gam^2 } 
			\frac{\eta \mu^2}{ \del^2} \ee^{- 2 \eta x}  , \\
\Lam_0
	&=	\frac{\mu}{\del} \ee^{- \eta x } , &
\Lam_1
	&=	\frac{- (T-t)}{2 \gamma  }
			\frac{\eta \mu}{\del} \ee^{- \eta x} \Big( \mu -\gamma  \tfrac{1}{2} \del^2 \ee^{2 \eta x} \Big) .
\end{align}
Once again, the second-order terms are omitted for the sake of brevity.

In Figure \ref{fig:cev-2}, we plot as a function of $\sig=\del \ee^{\eta x}$ the value function $V$, the optimal investment strategy $\pi^*$ and the implied Sharpe ratio $\Lam$.  We also plot the zeroth, first and second order approximations of these quantities.  For all three quantities, we observe a close match between the exact functions ($V$, $\pi^*$ and $\Lam$) and their second order approximation.  Figure \ref{fig:cev-2} also contain plots of the implied Sharpe ratio $\Lam$ (both exact $\Lam$ and the second order approximation $\bar{\Lam}_2$) as a function of the risk-aversion parameter $\gam$ for three different time horizons.

\begin{figure}[htb]
\centering
\begin{tabular}{cc}
\includegraphics[width=0.475\textwidth]{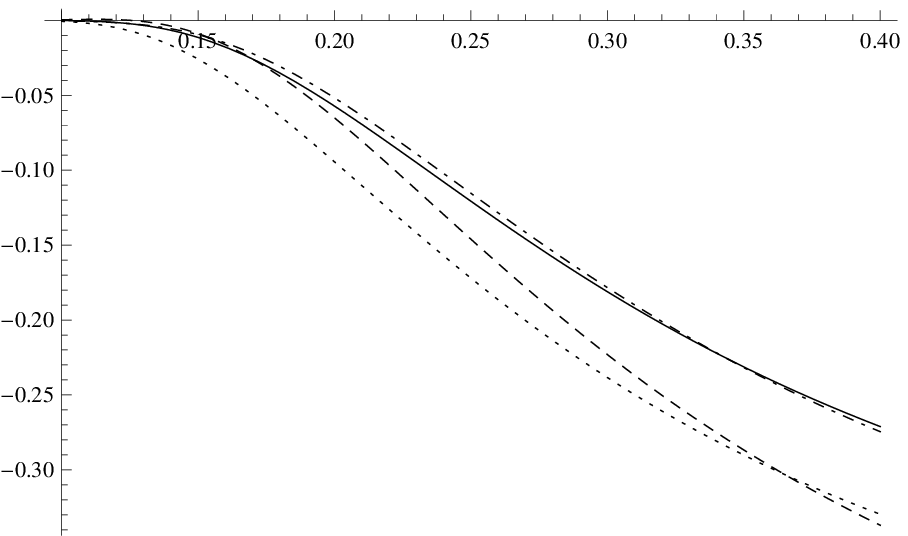}&
\includegraphics[width=0.475\textwidth]{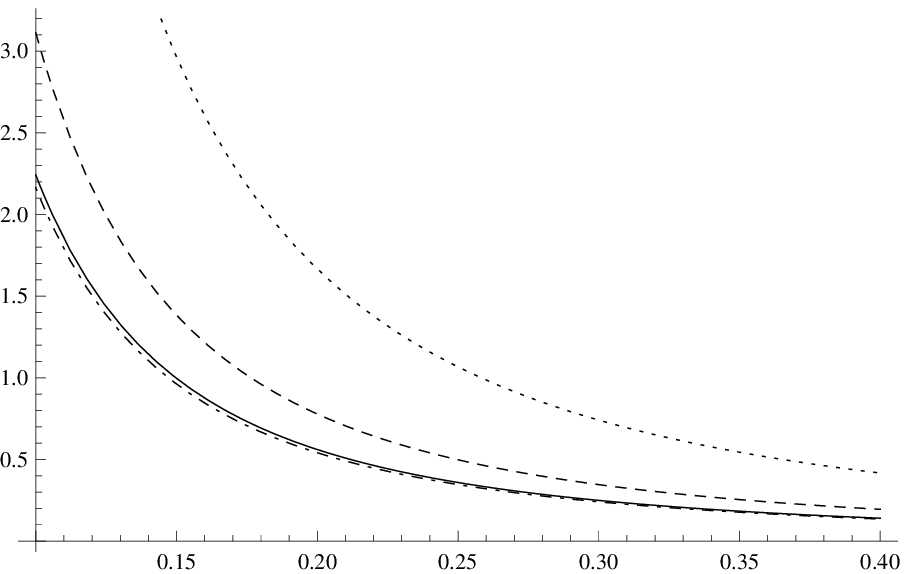}\\
$V$ vs $\sig$ & $\pi^*$ vs $\sig$\\
\includegraphics[width=0.475\textwidth]{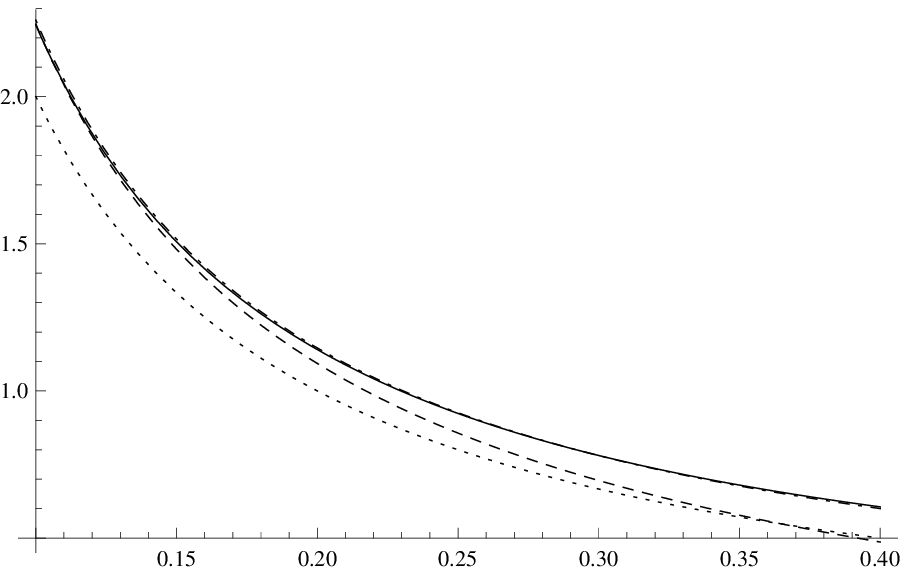}&
\includegraphics[width=0.475\textwidth]{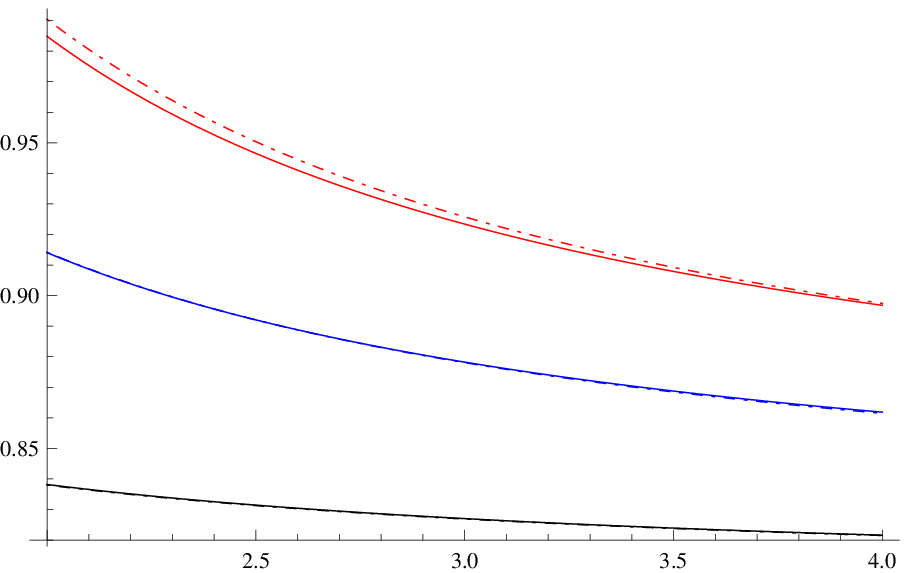}\\
$\Lam$ vs $\sig$ & $\Lam$ vs $\gam$
\end{tabular}
\caption{\small{{\em Value function (top left), optimal strategy (top right) and implied Sharpe ratio (bottom left) are plotted as a function of $\sig=\del \ee^{\eta x}$ assuming power utility \eqref{eq:CRRA} and dynamics given by local volatility model \eqref{eq:local.vol}.  
In all three plots, the solid line corresponds to the exact function, and the dotted, dashed and dot-dashed lines correspond to the zeroth, first and second-order approximations, respectively.
The parameters used in these three plots are: 
$T-t=5$, $w=1.0$, $\eta = -0.8$, $\del = 0.3$, $\mu = 0.3$ and $\gam =3.00$.
On the bottom right we fix the volatility $\sig=0.25$ and we plot the implied Sharpe ration $\Lam$ as a function of $\gam$ for three different time horizons $T-t=\{1,3,5\}$ corresponding to black, blue and red, respectively.  The solid lines are exact.  The dot-dashed lines are our second order approximation.
The parameters used in the bottom right plot are: $w=1.0$, $\eta = -0.8$, $\del = 0.3$ and $\mu = 0.3$.}}
}
\label{fig:cev-2}
\end{figure}

\section{Conclusion}
\label{sec:conclusion}
In this paper we consider the finite horizon utility maximization problem in a general LSV setting.  Using polynomial expansion methods, we obtain an approximate solution for the value function and optimal investment strategy.  The zeroth-order approximation of the value function and optimal investment strategy correspond to those obtained by \cite{merton1969lifetime} when the risky asset follows a geometric Brownian motion.  

The first-order correction of the value function can always be expressed as a differential operator acting on the zeroth-order term.  Higher-order corrections can always be expressed as a nonlinear transformation of a convolution with a Gaussian kernel.  For certain utility functions, these convolutions can be expressed in closed-form as a differential operator acting on the zeroth-order term.  Corrections to the zeroth-order optimal investment strategy can be obtained from the approximation of the value function.  

We also introduce in this paper the concept of an implied Sharpe ratio and derive an approximation for this quantity.  We obtain specific results for power utility and give a rigorous error bound for the value function in a stochastic volatility setting.  Finally, we provide two numerical examples to illustrate the accuracy and versatility of our approach.  The expansion techniques presented in this paper naturally lend themselves to other nonlinear stochastic control problems. Recent results for indifference pricing of options contracts have been developed in \cite{lorig-4}.

\bibliographystyle{chicago}
\bibliography{Bibtex-Master-3.02-submitted}

\end{document}